
\RequirePackage{fix-cm}
\documentclass{article}
\usepackage{authblk}

%
%
\usepackage{graphicx}
\usepackage[none]{hyphenat}

\usepackage{algorithm}
\usepackage{algpseudocode}      

\usepackage{latexsym}
\usepackage{amssymb}
\usepackage[T1]{fontenc}
\usepackage{geometry}
\geometry{a4paper,top=25mm, left= 34mm,right=34mm}
\usepackage{natbib}
\usepackage{amsmath}
\usepackage{graphicx}
\usepackage{hyperref}
\usepackage{enumitem}

\usepackage{xcolor}

\usepackage{amssymb}
\newcommand{\contaminated }{{\mathbb{I}_{C}}}

\usepackage{amsthm}

\newtheorem{proposition}{Proposition}%
\newtheorem*{proposition*}{Proposition}%

\usepackage{setspace}
\onehalfspacing

\usepackage{array}
\newcolumntype{C}[1]{>{\centering\arraybackslash}m{#1}}
\begin{document}

\title{Modeling portfolio loss distribution under infectious defaults and immunization}

\author[a,b]{Gabriele Torri\thanks{Corresponding author \textit{
			gabriele.torri@unibg.it}}}
\author[a]{Rosella Giacometti}
\author[c]{Gianluca Farina}

\affil[a]{Department of  Management, University of Bergamo. Via dei Caniana, 2, 24127 Bergamo, Italy}
\affil[b]{Faculty of Economics, VSB-Technical University of Ostrava, 17. Listopadu 2172/15, 70800 Ostrava, Czech Republic}
\affil[c]{Mediobanca S.P.A. Milan, Italy}

\date{\today}
\maketitle
\begin{abstract}
We introduce a model for the loss distribution of a credit portfolio considering a contagion mechanism for the default of names which is the result of two independent components: an \textit{infection attempt} generated by defaulting entities and a \textit{failed defence} from healthy ones. We then propose an efficient recursive algorithm for the loss distribution. Then we extend the framework with more flexible distributions that integrate a contagion component and a systematic factor to better fit real-world data. Finally, we propose an empirical application in which we price synthetic CDO tranches of the iTraxx index, finding a good fit for multiple tranches.\\
\textbf{Keywords:} Portfolio loss distribution, CDO, contagion, infection. 
	
\end{abstract}

\section*{Acknowledgments} 
Gabriele Torri is grateful for financial support from the Czech Scientific Foundation (GACR), Grant Number 23-07128S.

\section*{Disclosure statement}
The authors report there are no competing interests to declare.

\section*{Data availability statement}
The data used in this study are provided by LSEG. Restrictions apply to the availability of these data, which were used under license for this study. 

\vfill
\pagebreak

\section{Introduction}

Modeling the joint loss distribution of a credit portfolio is a relevant and challenging task, fundamental for the pricing of credit derivatives. A vast literature that discusses the pricing of multiple-name credit derivatives (such as CDOs, i.e. Collateralized Debt Obbligations) flourished in the first decade of the 00s, period characterized by a strong expansion of such markets. Not surprisingly, the interest from the financial community declined after the Global Financial Crisis of 2007-08 due to the contraction of the CDOs markets, whose opacity and misleading pricing were considered among the main causes. Despite the reduced hype, and the tarnished reputation of CDO markets, synthetic CDOs contracts written on credit indices such as the CDX and iTraxx are still a relevant and growing market. These instruments allow investors to identify opportunities and hedging strategies. Regulators may also be interested in the information content of synthetic CDO tranches, as they embed the market estimates of default correlation and on the probability of extreme credit events \citep[see e.g.][]{ECB2006information,scheicher2008has,wojtowicz2014cdos,gourieroux2021disastrous}, providing market-based information relevant for the estimation of systemic risk (see e.g. \citealp{montagna2020origin,gourieroux2021disastrous}).

The primary contribution of this paper is to develop a novel and tractable model for constructing the loss distribution of a credit portfolio, providing also an analytical characterization of pairwise distributions among names. A secondary contribution is to design and empirically validate a recursive algorithm for deriving this distribution suitable for real-world asset pricing applications. Our approach extends the contagion-based framework of \cite{davis2001infectious}, but differs in key aspects. Instead of modeling pairwise contagion shocks, we assume that the default of a single name has systemic relevance, affecting the entire portfolio. We introduce an immunization mechanism that mitigates contagion effects and prevents feedback loops, simplifying calibration and improving computational efficiency. The model also relaxes the homogeneous portfolio assumption, allowing heterogeneous default probabilities and infection rates.
To enhance flexibility, we propose two specifications integrating contagion with a common factor: (i) contagion conditional on a Gaussian factor, and (ii) a mixture distribution. These formulations produce tractable loss distributions suitable for practical applications, with parsimonious parametrization. Finally, we validate our framework using synthetic CDO tranches on the iTraxx index, obtaining a good empirical fit and meaningful interpretation of the parameters.

The paper is structured as follows: Section 2 presents a review of the literature, Section 3 introduces our contagion model and describes the algorithm for the portfolio loss distribution. Section 4 extends the model to more realistic loss distributions that combine contagion dynamics and deafault correlations due to a common factor. In Section 5 a version of the model with a restricted parameter structure is applied to the problem of pricing CDOs, implemented then  on real-world iTraxx data in Section 6. Section 7 concludes, while proofs of theoretical results are provided in Appendix A.

\section{Literature review}

One of the main problems in credit modeling is the default clustering: it has been observed, especially during recessions, that defaults are not uniformly spaced in time but rather tend to concentrate over small periods of time (see e.g. \citealp{azizpour2018exploring} presenting evidence of default clustering in the US market). 

The dependence between obligors in a credit portfolio mainly arises from two sources of risk. The first source is a cyclical dependence to underlying common factors inducing systematic risk. The second source of risk is linked to the degree of interaction between obligors. Financial distress affecting a small group of obligors can spread to a large part of the economy inducing a rapid increase of defaults. We refer to this risk as default contagion or systemic risk. This component of risk is the channel of defaults acceleration through domino effects.

The most used approaches to introduce dependence among defaults (via exposure to common factors, by correlating the default intensity processes, and by direct application of copula methods) struggle to replicate the observed clustering pattern. 
\\
\cite{duffie2009frailty} empirically test  whether under doubly stochastic assumption the process of cumulative defaults can be modeled  as time changed Poisson process. They used  intensity estimated on US data of corporate defaults from 1979 to 2004 derived in \cite{das2007common}. Their result point out that the proposed model is not able to explain properly the default correlation of the data.  Similarly, \cite{azizpour2018exploring} find that the amount of default clustering cannot be explained by the exposure to observable and latent factors, showing strong evidence of contagion dynamics.

\cite{frey2001copulas} show that the correlation structure between obligors is not able to capture the dependence completely. It is necessary to introduce more relevant information on  the lower tail dependence. This is equivalent to exogenously specify the appropriate copula from which the amount of default contagion is derived.
\cite{yu2007correlated} works with an intensity based model where the default intensities are driven also by past defaults history, in addition to exogenous factors. A special case of this model is the copula approach presented by \cite{schonbucher2001copula}. Still, the direct modeling of the correlation between default intensity processes introduces a weak dependency resulting in a lower joint probability of defaults  than expected for highly correlated entities \citep{jouanin2001modelling}.  Other approaches focus on distribution with fat tails for the modeling of latent factors \citep[see e.g.][]{kalemanova2007normal,albrecher2007generic}, or model the evolution of latent variables of a structural model using jump-diffusion processes \citep{bujok2012numerical}.\\
One valid alternative  is to  explicitly enrich the modeling framework with contagion or infection mechanisms; this increases the probability of observing extreme losses in the portfolio and can also account for default clustering. Adding this particular dependency structure introduces a looping mechanism that makes calibration problematic: the probability of default of each name can impact and is impacted by the probability of default of the others. Several attempts have been proposed in order to resolve the looping issue when adding such effects.
Relative to contagion mechanisms, \cite{jarrow2001counterparty} impose a hierarchical approach and suggest to separate the firms into two groups: primary names can only default idiosyncratically while names in the second group can also default because of infection starting from the first set of names. Their work generalizes reduced-form models by making default intensities depend on counterpart default. The primary/secondary separation has been applied by other authors too, for example \cite{rosch2008estimating}: they start from a one-factor model where the number of defaults of primary names can affect the default probabilities of secondary names. \newline
In \cite{neu2004credit}, firms can have either mutually supportive or competitive relationships between them; a default will hence decrease the default probability of competitors but increase the same quantity for firms that had positive inflows from the name in distress (for example suppliers/clients).\newline
\cite{egloff2007simple} instead add micro-structural dependencies via a directed weighted graph and show how even well diversified portfolios carry significant credit risk when such inter-dependencies are accounted for. The open problem in their approach (as well as in other network based models) is the calibration of the weights that form the network.\newline
\cite{giesecke2004cyclical} add contagion processes to more standard common factor approaches while \cite{frey2010dynamic}  apply a Markov-chain model with default contagion to the problem of dynamically hedge CDO products. Nowadays the attention on contagion mechanise is increasing with a particular attention to explicit modeling on the interconnections among obligors (see \citealp{torri2018robust}).\\

Our work draws inspiration  from the seminal paper of \cite{davis2001infectious} where firms can default in two ways, either idiosyncratically or via infection from other defaulting names. Unfortunately, in their most generic (and therefore elegant) specification of the model, one has to rely on Monte Carlo simulations in order to obtain the portfolio loss distribution. This is not computationally efficient for large portfolios since in a portfolio of size $n$, the variables to consider are of order $n^2$. The solution proposed by the authors leads to closed form results for the loss distribution at the cost of quite strict assumptions: the portfolio considered has to be homogeneous with respect to the probability of idiosyncratic default, the infection rates and the losses given default (LGD).
\cite{sakata2007infectious} extended Davis and Lo's original model by assuming that idiosyncratic defaults might in fact be avoided with the help of non-defaulted names. In their model there is hence not only a contagion that causes more defaults but also a positive effect due to the intervention of other names, called recovery spillage, that prevents entities from defaulting. The main results they obtain are similar to the ones presented in \cite{davis2001infectious} in terms of both complexity and assumptions required (homogeneous portfolio).
\newline
\cite{cousin2013extension} use a multiple period model where defaults happening at time $t$ can still cause contagion later on. In addition, they consider the case where more than one infection is needed to cause a default by contagion and, from a theoretical point of view, they relax several of  \cite{davis2001infectious} assumptions. Unfortunately, the most generic specification of the model is  computationally very demanding and  the numerical applications shown in \cite{cousin2013extension}  are based on assumptions that are in line with \cite{davis2001infectious}; in particular, the authors require that the Bernoulli variables used are exchangeable. \newline

Overall, considering the literature on contagion mechanisms, we can identify two groups of contagion models used for constructing the loss curve. On one side, there are elegant models with permissive assumptions, though they are cumbersome to use due to the lack of closed-form solutions. On the other side, computationally efficient algorithms exist, but they rely on restrictive assumptions.\\
Ideally, one would like to maintain the contagion mechanism as general as possible but computationally tractable: the proposed model aims to provide a compromise between these two relevant aspects, that moves toward this ideal solution. 

\section{The proposed  infection model} \label{sec:contagion}
As previously mentioned, our model draws inspiration  from the seminal paper of \cite{davis2001infectious}.
Given a portfolio of $n$ entities, \cite{davis2001infectious}  model the default event at time $t$ of entity $i$ via the Bernoulli variable $Z_i(t)$ that can either take value 0 (indicating survival) or 1 (default) and that is constructed according to the following equation
\begin{equation} \label{eq:DavisLoModDesc}
	Z_i(t) = X_i(t) + (1-X_i(t)) \cdot \left[ 1 - \prod_{j\neq i} \left(1-X_j(t) \cdot Y_{i,j}(t)\right) \right].
\end{equation}
The main assumption behind  equation (\ref{eq:DavisLoModDesc}) is the existence of $n$ variables $X_i(t),\space i=1,\cdots, n$ and $n\cdot (n-1)$ variables $Y_{i,j}(t), \;i,j=1,\cdots,n,\; j\neq i$. Variable $X_i(t)$ is responsible for idiosyncratic default of the $i$ name, while $Y_{i,j}(t)$ governs the possibility that name $i$ is infected by name $j$. 
The additional assumption taken is that the variables $X_i(t)$ and $Y_{i,j}(t)$ are i.i.d. according to a Bernoulli distribution. \newline

It is a one period model $[0,t]$ where, at time $t$, firm $i$ is in default, (i.e. $Z_i(t) =1$), either  via an idiosyncratic default (i.e. $X_i(t)=1$) or if at least one other bond $j$ defaults directly and infects the first one ($X_j(t)=1$ and $Y_{i,j}(t)=1$). Dependency among the variables $Z_i(t)$ is hence introduced via the infection mechanism triggered by the $Y(t)$s variables. 

This very elegant and extremely flexible formulation has one main drawback: there are no closed-form formulas (or even semi-analytical techniques) that can be applied to determine the total portfolio loss distribution. The selective default mechanism induce an asymmetry  in the $Y_{i,j}$ variables that allow for name $j$ to selectively ``infect'' some names but not others. 
\newline
In order to solve this problem, the model we propose drops the above highly asymmetric framework. In fact, we introduce two important innovations:  
\begin{enumerate}
	\item each name, upon idiosyncratic default, can either infect no other name, or spread an infection attempt to the entire system instead of individual firms. The economic interpretation of the infection channel is that the idiosyncratic default of the firm is read, by the rest of the market, as a shock capable of triggering more defaults. 
	\item Other names can survive infection attempts via an immunization mechanism.
\end{enumerate}
We introduce a potential systemic propagation from the default of $j$ to all nodes and we allow for each node $i$ to defend itself. So from one side we have simplified the original model by dropping the ``selective'' infection mechanism and, on the other side, we have enriched it by allowing names to develop immunization.

The following equations model our approach:
\begin{equation} \label{eq:MainModDesc}
	Z_i(t) = X_i(t) + \left[1-X_i(t)\right] \cdot \left[1-U_i(t)\right] \cdot \left\{ 1 - \prod_{j\neq i} \left[1-X_j(t) \cdot V_j(t)\right] \right\},
\end{equation}
where we have postulated the existence of $3n$ mutually independent Bernoulli variables $X_1,...,X_n,V_1,...,V_n,U_1,...,U_n$ compared of the $n+n(n-1)$ of \cite{davis2001infectious}.

We  have two mechanisms for default: via idiosyncratic defaults $(i.e. X_i(t)=1)$ and via contagion. In particular name $j$ infects name $i$ only if two independent conditions are satisfied:
\begin{enumerate}
	\item \textbf{Infection attempt from $j$}: name $j$ defaults idiosyncratically and attempts to spread the infection to \textbf{all} other names. This component is driven by the $V(t)$ variables and $V_j(t)=1$ means that $j$ is infective ($X_j(t)=1$ and $V_j(t)=1$).
	\item \textbf{Failed defence from $i$}: name $i$ fails to defend itself ($U_i(t)=0$) from \textbf{every} possible infection. This component is driven by the $U$ variables.
\end{enumerate}
The independence assumption among the building blocks of \eqref{eq:MainModDesc} serves two purposes: firstly, it is crucial for proving theoretical results in later sections of the paper, as it allows to split joint distributions in an easy way. Secondly, it represents a tractable mechanism of creating dependency: we use specific combinations of independent variables to generate dependent ones, in exactly the same way it is done, for example, in factor models where independent building blocks (the common and the idiosyncratic factors) are assembled together to create dependency.\newline

The mutually independent Bernoulli variables $X_i(t),V_i(t),U_i(t)$, $i=1,...,n$, have  probability  $ P\{X_i(t)=1\} = p_i, \,\, P\{U_i(t)=1\} = u_i, \,\, P\{V_i(t)=1\} = v_i $ on a time horizon $[0,t]$. For the rest of the paper, to simplify the notation, we omit the time index when not required. 
For each name $i$, the set of parameters $[p_i, u_i,v_i]$ defines its behaviour in the model. In particular, high values of $v$ represent names that, upon default, are extremely infective. This can be used for pivotal names that are regarded as critical for the well being of the entire system. Via $v_i$ it is possible to tune the shock that the idiosyncratic default of entity $i$ has on the rest of the system. Low values of $u$ represent names that are strongly dependent on the health of the rest of the system. The exact economic interpretation of the immunization depends on the nature of the system we are trying to model. The role of $p$, instead, is to control only the probability of idiosyncratic default. We stress that $p_i$ is not the final probability of default of the name, as the latter (i.e. $\tilde{p}_i := P\{Z_i=1\}$) is the result not only of $p_i$ but also of the rate of infections from the other names as well as its own ability to benefit from immunization via $u_i$.\newline
We present a useful proposition that is proved in appendix \ref{A} and that provides $P\{Z_i=1\}$, the probability of default of a single name.

\begin{proposition}\label{pr: I}
	Let $X_i,V_i,U_i$, $i=1,...,n$ be mutually independent Bernoulli variables with probabilities  $p_i= P\{X_i=1\}, \,\, u_i=P\{U_i=1\} , \,\, v_i=P\{V_i=1\}$ on a time horizon $[0,t]$. Let $Z_i$ be defined according to Eq. \ref{eq:MainModDesc} and let $\tilde{p}_i := P\{Z_i=1\}$. We have:
	\begin{equation}\label{eq: PD} \tilde{p}_i = p_i+[1-p_i]\cdot [1-u_i]\cdot I_{\overline{\{i\}}},\end{equation}
	where 
	\begin{equation} I_{\overline{\{i\}}} := \left\{1-\prod_{j\neq i } \left[1-p_j\cdot v_j\right]\right\}. \end{equation}
\end{proposition}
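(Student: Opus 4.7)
The plan is to compute $\tilde p_i = P\{Z_i=1\}$ directly by taking expectations on both sides of Eq.~(\ref{eq:MainModDesc}), exploiting the fact that $Z_i$ is $\{0,1\}$-valued so that $E[Z_i]=P\{Z_i=1\}$, together with the mutual independence of the $3n$ Bernoulli building blocks. The structural observation that makes the computation immediate is that the two summands defining $Z_i$ live on disjoint events: the first carries an $X_i$ factor and the second a $(1-X_i)$ factor, so no inclusion--exclusion correction is needed and linearity of expectation yields $\tilde p_i$ as the sum of their means.

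First I would handle the idiosyncratic term: $E[X_i]=p_i$. Next I would evaluate the contagion term. By independence of $X_i,U_i$ from the family $\{X_j,V_j:j\neq i\}$, the expectation factorises as
\begin{equation*}
E\!\left[(1-X_i)(1-U_i)\Bigl(1-\prod_{j\neq i}(1-X_jV_j)\Bigr)\right]
=(1-p_i)(1-u_i)\cdot E\!\left[1-\prod_{j\neq i}(1-X_jV_j)\right].
\end{equation*}
For the remaining expectation, I would use the mutual independence of the pairs $(X_j,V_j)$ across $j$ (the factors $1-X_jV_j$ being functions of disjoint subsets of the independent variables) to move the expectation inside the product:
\begin{equation*}
E\!\left[\prod_{j\neq i}(1-X_jV_j)\right]=\prod_{j\neq i}E[1-X_jV_j]=\prod_{j\neq i}(1-p_jv_j),
\end{equation*}
where the last equality uses independence of $X_j$ and $V_j$ to get $E[X_jV_j]=p_jv_j$. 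Assembling these three pieces gives exactly
\begin{equation*}
\tilde p_i=p_i+(1-p_i)(1-u_i)\Bigl\{1-\prod_{j\neq i}(1-p_jv_j)\Bigr\}=p_i+(1-p_i)(1-u_i)\,I_{\overline{\{i\}}}.
\end{equation*}

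There is no real obstacle here: the only point that deserves a careful sentence is the factorisation across $j$ of the expectation of the product, which is justified by the mutual independence of the $3n$ variables and by the fact that the $n-1$ factors $1-X_jV_j$ depend on pairwise disjoint index sets. Everything else is linearity of expectation and the Bernoulli identity $E[B]=P\{B=1\}$ applied to $Z_i$.
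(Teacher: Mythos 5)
Your proof is correct and takes essentially the same approach as the paper: both decompose $\tilde p_i$ into the idiosyncratic contribution $p_i$ plus a contagion term, and both reduce the latter to $(1-p_i)(1-u_i)\bigl[1-\prod_{j\neq i}(1-p_j v_j)\bigr]$ via the mutual independence of the $3n$ Bernoulli building blocks, with the product expectation $E\bigl[\prod_{j\neq i}(1-X_jV_j)\bigr]=\prod_{j\neq i}(1-p_jv_j)$ being exactly the content of the paper's auxiliary lemma on $I_A$. The only cosmetic difference is that the paper conditions on $X_i$ and argues in conditional-probability language (using that a product of binary variables equals one iff each factor does), whereas you compute $E[Z_i]$ directly by linearity — noting, correctly, that the disjoint-support structure is what guarantees $Z_i\in\{0,1\}$ and hence $E[Z_i]=\tilde p_i$.
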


The above result shows that the marginal default probability of entity $i$ is a function of its idiosyncratic default probability, its immunization ability and the component $I_{\overline{\{i\}}}$. It is worth noting that it \textit{does not} depend on the marginal default probabilities of the other names and neither on their immunization capabilities: the looping mechanism that entangles marginal calibrations in other approaches has been effectively broken.

\subsection{Portfolio loss distribution at time $t$} \label{sec:loss_distr}
Let $L_n(t)$ represents the total amount of losses at time $t$ of a portfolio with $n$ names:
$$L_n(t)=\sum_{i=1}^{n} d_i \cdot Z_i(t),$$
where the integer numbers $d_i$ are the units of losses associated with the default of the $i$ name and $Z_i(t)$ is the random variable registering the default on a time horizon $[0,t]$.
The loss distribution $L_n(t)$ is affected by the probabilities $ P\{X_i=1\} = p_i(t), \,\,$ and $ P\{U_i=1\} = u_i(t), \,\, P\{V_i=1\} = v_i(t) $ on the time horizon $[0,t]$.

We present an algorithm for calculating $P\{L_n(t)=h\}$ for a given integer $h$. The algorithm is similar, in spirit, to the one presented by \cite{andersen2003all} for conditionally independent models. 
\cite{andersen2003all} showed that an efficient way of performing the convolution of the independent conditioned default probabilities is to construct the portfolio loss distribution by adding each name $j$ with $j=1,...,n$ one by one via a recursive relationship. 

In our model, we can exploit the independence components in equation (\ref{eq:MainModDesc}) to obtain a recursive algorithm.  The rest of this section is devoted to the introduction of our procedure to construct the portfolio loss distribution.\newline
The amount of the portfolio loss can be seen as the sum of two components:
$$
\begin{array}{c}
	L_n(t)=\sum_{i=1}^{n} d_i \cdot Z_i(t) =
	L_n^I(t) + L_n^C(t).
\end{array}
$$
where $L_n^I(t)$ and $L_n^C(t)$ represents the units of losses due to idiosyncratic effects and to contagion events, respectively. Beyond these two components, we consider a third group: firms that, in an uncontaminated world, have neither activated defensive strategies nor defaulted idiosyncratically. These firms remain solvent only because contagion has not yet occurred; in the presence of contagion, they would instead default.
We therefore introduce a third quantity $L_n^P(t)$ which denotes the number of units of potential losses in a (so far) uncontaminated environment, corresponding to firms that are not in default but have not adopted any defensive strategy.
Formally, these three building blocks are defined as:
$$
\begin{array}{lr}
	L_n^I(t):=\sum_{i=1}^{n} d_i \cdot X_i(t)  & \text{Idiosyncratic driven losses} \\
	L_n^C(t):=\sum_{i=1}^{n} d_i \cdot (Z_i(t)-X_i(t)) & \text{Contagion driven losses}\\
	L_n^P(t) := (1-\contaminated)\cdot \sum_{i=1}^{n} d_i \cdot [1-X_i(t)]\cdot [1-U_i(t)] & \text{Potential losses}
\end{array}
$$
where $\contaminated$ is an indicator function that is equal to one if there is at least one infection active and zero otherwise:  
\begin{equation}
	\contaminated = \begin{cases}1, & \text{if}\; \sum_{i = 1}^n X_i V_i >0 \\ 0, &\text{otherwise} \end{cases}.
\end{equation}

We underline that if contagion has happened in the system (i.e. $\contaminated=1$), $L_n^P(t)$ is equal to zero, since any name without a defensive strategy is affected by the infection and thus they are counted in the $L_n^C(t)$ component. In contrast, $L_n^P(t)=0$ does not imply that there are no active contagions, as it can be that a company $j$ transmitted contagion, but any other company $i$ either defaulted idiosyncratically ($X_i = 1$), or has immunity (i.e. $U_i = 1$).

The distribution of $L_n(t)$ can be described as:
\begin{equation} \label{eq:LossFromAlfaBeta_prep}
	\begin{array}{lr}
		P\{ L_n(t) = h\} &= \\
		P\{ L_n^I(t)+L_n^C(t) = h\} &= \\
		\sum_{{k=0}}^{\bar{\ell}} \left[ P\{ L_n^I(t)+L_n^C(t) = h | L_n^P(t)=k\} \cdot P\{L_n^P(t)=k\} \right]&=\\
		\sum_{{k=0}}^{\bar{\ell}} \left[ P\{ L_n^I(t)+L_n^C(t) = h, L_n^P(t)=k\} \right], &
	\end{array}
\end{equation}
where the integer $\bar{\ell}=\sum_n d_i$ is the maximum amount of losses in the system. We further partition the probability space on the basis of the indicator function $\contaminated$, and discarding the events with zero probability we rewrite Equation (\ref{eq:LossFromAlfaBeta_prep}) as follows:

\begin{equation} \label{eq:LossFromAlfaBeta}
	\begin{array}{ll}
		P\{ L_n(t) = h\} = & \sum_{{k=0}}^{\bar{\ell}-h} \left[ P\{ L_n^I(t)=h,L_n^C(t) = 0, L_n^P(t) = k,{\contaminated = 0}\}\right] \,+\\
		\quad & \sum_{{k=0}}^{{h}} \left[P\{ L_n^I(t)=k,L_n^C(t) = h-k, L_n^P(t) = 0, {\contaminated = 1}\} \right],
	\end{array}
\end{equation}

where the first term refers to the cases with no active infections (uncontaminated world), and the second term to the case in which there is at least one active infection (contaminated world). Indeed, to have $h$ losses, either (a) we are in an uncontaminated world with $h$ units of idiosyncratic losses (and $k$ units of potential losses), or (b) we are in a contaminated world with $k$ units of idiosyncratic driven losses and $h-k$ units of contagion driven losses. We underline that the potential losses in the contaminated world are equal to 0 by construction: in such state, all the possible losses due to contagion have manifested ($L_n^C(t) = h-k$) and no possible other contagion losses are possible ($L_n^P(t) = 0$). Thus, any state with $L_n^P(t) \neq 0$ and $\contaminated = 1$ has zero probability and is not considered.
Let us finally define for convenience two quantities:
\begin{eqnarray*}
	\alpha_{n}(h,k,t) & := & P\{L_n^I(t)=h,\: L_n^C(t)=0,\: L_n^P(t)=k,\contaminated = 0\}, \\
	\beta_{n}(h,k,t) & := & P\{L_n^I(t)=h,\: L_n^C(t)=k,\: L_n^P(t)=0,\contaminated = 1\},
\end{eqnarray*}

and Equation (\ref{eq:LossFromAlfaBeta}) can be written as
$$ P\{ L_n(t) = h\} =\sum_{k=0}^{\bar{\ell}-h} \alpha_n(h,k,t) + \sum_{k=0}^h \beta_n(k,h-k,t). $$

$\alpha_n(h,k,t)$ represents the probability of realizing $h$ units of losses in an uncontaminated world of $n$ names, in which there are also $k$ units of losses at risk should an infection appear. On the other hand, $\beta_n(h,k,t)$ represents the probability of realizing $h+k$ units of losses in a contaminated universe of $n$ names of which $h$ are due to idiosyncratic defaults and $k$ are due to pure infection. We point out that the lack of contagion driven losses and potential losses (i.e. $L_n^C(t)=0$, $\: L_n^P(t)=0$) does not imply that there are no active infections: indeed we have the cases in which a contagious default occurs, but it does not affect any other company, as other companies are either defaulted idiosyncratically ($X_j = 1$), or are immune to infection ($U_j = 1$). For this reason $\alpha_{n}(h,0,t) \neq \beta_{n}(h,0,t)$.\newline

In order to construct recursively the portfolio loss distribution we formulate the following proposition. A derivation of this system of equations can be found in appendix \ref{A}. When not necessary, we omit the argument $t$ in order to simplify the notation.\newline

\begin{proposition}\label{portfolio}
Consider a portfolio comprising $n$ assets, with each asset denoted by the index $j = 1,\dots,n$. The following recursive relationship links $[\alpha_j(\cdot,\cdot), \beta_j(\cdot,\cdot)]$ to $[\alpha_{j-1}(\cdot,\cdot),\beta_{j-1}(\cdot,\cdot)]$:
	\begin{equation}\label{alfabeta}
		\begin{array}{cclr} 
			\ 								&\  & (1-p_j) \cdot u_j \cdot \alpha_{j-1}(h,k) 			& + \\
			\alpha_j(h,k) & = & (1-p_j) \cdot (1-u_j) \cdot \alpha_{j-1}(h,k-d_j) & + \\ 
			\ 								&\  & p_j \cdot (1-v_j) \cdot \alpha_{j-1}(h-d_j,k),			& \  \\
			\ 								&\  & \ 											& \  \\
			\ 								&\  & (1-p_j) \cdot u_j \cdot \beta_{j-1}(h,k) + p_j \cdot \beta_{j-1}(h-d_j,k) & + \\
			\beta_j(h,k) 		& = & (1-p_j) \cdot (1-u_j) \cdot \beta_{j-1}(h,k-d_j) & + \\ 
			\ 								&\  & p_j \cdot v_j \cdot \alpha_{j-1}(h-d_j, k), &\ \\
		\end{array}
	\end{equation}
	with the following boundary conditions:
	\begin{equation}\label{boundarycond}
		\begin{array}{l}
			\alpha_{0}(0,0) = 1, \\ \alpha_{0}(i,j) = 0 \quad \; \forall (i,j) \neq (0,0), \\ \beta_{0}(i,j) =0 \; \quad \forall i,j.
		\end{array}
	\end{equation}
	Moreover, the order chosen to include the terms does not affect the final result.
\end{proposition}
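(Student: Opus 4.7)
The plan is to proceed by induction on $j$, the number of names folded into the portfolio so far, conditioning at each step on the triple $(X_j, V_j, U_j)$ for the newly added name. By mutual independence of the $3n$ Bernoulli variables, this triple is independent of the $3(j-1)$ variables governing names $1,\dots,j-1$, so any conditional probability factors cleanly into a weight on the triple and a residual probability that identifies an $\alpha_{j-1}$ or $\beta_{j-1}$ value. The base case $j=0$ is immediate: with no names, $L_0^I = L_0^C = L_0^P = 0$ and the empty sum $\sum_{i=1}^{0} X_i V_i$ forces $\mathbb{I}_C = 0$, which gives $\alpha_0(0,0)=1$ together with the remaining boundary conditions.

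For $\alpha_j(h,k)$ I would use that $\mathbb{I}_C = 0$ over the first $j$ names forces $X_j V_j = 0$, leaving three triples with nonzero contribution. In $(X_j,U_j)=(0,1)$, weight $(1-p_j)u_j$, name $j$ contributes nothing to any tally and the residual event is exactly $\alpha_{j-1}(h,k)$. In $(X_j,U_j)=(0,0)$, weight $(1-p_j)(1-u_j)$, the name is neither defaulted nor defended, so with $\mathbb{I}_C = 0$ it adds $d_j$ to $L_j^P$, yielding $\alpha_{j-1}(h,k-d_j)$. In $(X_j,V_j)=(1,0)$, weight $p_j(1-v_j)$, the name defaults idiosyncratically, adds $d_j$ to $L_j^I$, and triggers no infection, so the residual is $\alpha_{j-1}(h-d_j,k)$. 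Summing gives the stated $\alpha$-recursion.

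For $\beta_j(h,k)$ I require $\mathbb{I}_C = 1$. Three of the four relevant triples have $X_j V_j = 0$ and therefore inherit contamination from the first $j-1$ names, landing inside $\beta_{j-1}$: the triple $(0,1)$ with weight $(1-p_j)u_j$ contributes $\beta_{j-1}(h,k)$; the triple $(0,0)$ with weight $(1-p_j)(1-u_j)$ makes name $j$ a contagion casualty via $Z_j-X_j=1$, adding $d_j$ to $L_j^C$ and giving $\beta_{j-1}(h,k-d_j)$; the triple $(1,0)$ with weight $p_j(1-v_j)$ adds $d_j$ to $L_j^I$ without fresh infection, giving $\beta_{j-1}(h-d_j,k)$. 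The delicate case is $(X_j,V_j)=(1,1)$, weight $p_j v_j$, where name $j$ ignites contamination on its own, so $\mathbb{I}_C = 1$ holds regardless of the status of the first $j-1$ names. If those were already contaminated, the residual is $\beta_{j-1}(h-d_j,k)$; if they were uncontaminated, the potential losses $L_{j-1}^P$ are precisely what now converts into contagion losses in the enlarged system --- each $i\le j-1$ with $X_i=0,\,U_i=0$ acquires $Z_i-X_i=1$ --- so this subcase maps to $\alpha_{j-1}(h-d_j,k)$. Combining with the $(1,0)$ case, the coefficient of $\beta_{j-1}(h-d_j,k)$ collapses to $p_j(1-v_j)+p_j v_j = p_j$, while $p_j v_j\,\alpha_{j-1}(h-d_j,k)$ emerges as a hybrid term, exactly matching the stated recursion.

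Order independence follows once the induction is established: the functionals $L_n^I, L_n^C, L_n^P$ and $\mathbb{I}_C$ are symmetric in the name labels, and the joint law of $\{(X_i,V_i,U_i)\}_{i=1}^n$ is invariant under any relabeling that carries the parameter triples with it, so $\alpha_n$ and $\beta_n$ depend only on the set of names, not on the order of inclusion. The main conceptual obstacle, and the reason the $\beta$ recursion cannot be self-contained, is recognising that potential losses accumulated in the uncontaminated regime for $j-1$ names must be reclassified as contagion losses the instant name $j$ activates an infection attempt; this is precisely what the $p_j v_j\,\alpha_{j-1}(h-d_j,k)$ term encodes.
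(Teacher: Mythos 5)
Your proof is correct, and the derivation of the recursion itself is essentially the paper's: the same enumeration of the new name's states, with the same three contributions to $\alpha_j$ (full survival, undefended survival, non-infectious default) and the same treatment of the ``first infection'' term $p_j v_j\,\alpha_{j-1}(h-d_j,k)$ converting accumulated potential losses into contagion losses. Your finer partition of the defaulting case into $(X_j,V_j)=(1,0)$ and $(1,1)$, with the $(1,1)$ case split further by the contamination status of the first $j-1$ names, recombines via $p_j(1-v_j)+p_jv_j=p_j$ into exactly the paper's four $\beta$-terms; the paper reaches the same place slightly more directly by noting that in an already contaminated world $V_j$ is irrelevant. The one genuine divergence is the order-independence claim. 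The paper proves it mechanically: it expands two consecutive applications of the recursion (adding names $i$ then $j$), obtains a nine-term expression for $\alpha_{m+2}$, observes each term is symmetric in $i$ and $j$, and extends to arbitrary reorderings by induction. You instead argue that $\alpha_n(h,k)$ and $\beta_n(h,k)$ are by definition probabilities of events built from the label-symmetric functionals $L_n^I$, $L_n^C$, $L_n^P$ and $\mathbb{I}_{C}$ of the full collection $\{(X_i,V_i,U_i)\}_{i=1}^n$, so once the induction shows the recursion computes these probabilities for any insertion order, order independence is automatic. Your argument is shorter and more conceptual, and it makes clear \emph{why} the result holds; the paper's algebraic check is more self-contained in that it verifies the recursion commutes without appealing to the probabilistic identification, but it is otherwise strictly more laborious. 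One small caveat your route quietly relies on (and satisfies): the inductive step must establish that the intermediate $\alpha_j,\beta_j$ are the correct probabilities for the \emph{closed} subsystem of the first $j$ names under the chosen order, since $L_j^C$ and $\mathbb{I}_C$ are defined relative to infections within that subsystem only --- your conditioning on the independent triple $(X_j,V_j,U_j)$ does exactly this.
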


Proposition \ref{portfolio} allows to construct the loss distribution recursively, adding progressively one name at the time (that is, we first consider a system with a single asset, then with two assets, and so on up to $n$). Since the recursion does not depend on the order chosen when adding names, the final distribution of the portfolio losses defined by $\alpha_n(h,k)$ and $\beta_n(k,h-k)$ is not affected by the order of the companies. We also point out that if the portfolio is equally weighted (as in the case of synthetic CDOs),  and the loss given default (LGD) are assumed to be uniform across names, we can set $d_j = 1 \; \forall j$, so that $\bar{\ell} = n$, and then rescale accordingly the support of the loss distribution.

In practice, assuming constant LGD, we set $d_j = 1 \; \forall j$ and we compute the coefficients $\alpha_n(h,k)$ and $\beta_n(h,k)$ for Equation \ref{eq:LossFromAlfaBeta} using Algorithms \ref{algo:alpha} and \ref{algo:beta}. Both algorithms have cubic space and time complexity, although many combinations are skipped. The implementation is practical for applications with a moderate number of assets (up to $200$, as in the case of synthetic CDOs such as the iTraxx, where $n=125$), but for larger portfolio the algorithm becomes impractical. In such cases, a Monte Carlo approach may be suitable, as discussed in Appendix \ref{sec:MC}.
 
\begin{algorithm}[H]
	\caption{Computation of $\alpha_n(h,k)$ coefficients} \label{algo:alpha}
	\begin{algorithmic}[1]
		\Require Vectors $p$, $u$, $v$ of length $n$
		\For{$j = 0$ to $n$}
		\For{$h = 0$ to $n$}
		\For{$k = 0$ to $n$}
		\If{$j=h=k=0$}
		\State $\alpha_0(0,0) \gets 1$
		\ElsIf{$j=0$ \textbf{or} $h>j$ \textbf{or} $k>j-h$}
		\State $\alpha_j(h,k) \gets 0$
		\Else
		\State $\alpha_j(h,k) \gets (1-p_j)u_j\,\alpha_{j-1}(h,k)$
		\If{$k>0$}
		\State $\alpha_j(h,k) \gets \alpha_j(h,k) + (1-p_j)(1-u_j)\alpha_{j-1}(h,k-1)$
		\EndIf
		\If{$h>0$}
		\State $\alpha_j(h,k) \gets \alpha_j(h,k) + p_j(1-v_j)\alpha_{j-1}(h-1,k)$
		\EndIf
		\EndIf
		\EndFor
		\EndFor
		\EndFor
	\end{algorithmic}
\end{algorithm}

\begin{algorithm}[H] 
	\caption{Computation of $\beta_n(h,k)$ coefficients} \label{algo:beta}
	\begin{algorithmic}[1]
		\Require Vectors $p$, $u$, $v$ of length $n$, and array $\alpha_j(h,k)$
		\For{$j = 0$ to $n$}
		\For{$h = 0$ to $n$}
		\For{$k = 0$ to $n$}
		\If{$j=0$}
		\State $\beta_0(h,k) \gets 0$
		\ElsIf{$h + k > j$}
		\State $\beta_j(h,k) \gets 0$
		\Else
		\State $\beta_j(h,k) \gets (1-p_j)u_j\,\beta_{j-1}(h,k)$
		\If{$k>0$}
		\State $\beta_j(h,k) \gets \beta_j(h,k) + (1-p_j)(1-u_j)\beta_{j-1}(h,k-1)$
		\EndIf
		\If{$h>0$}
		\State $\beta_j(h,k) \gets \beta_j(h,k) + p_j\,\beta_{j-1}(h-1,k) + p_jv_j\,\alpha_{j-1}(h-1,k)$
		\EndIf
		\EndIf
		\EndFor
		\EndFor
		\EndFor
	\end{algorithmic}
\end{algorithm}

The following result ensures instead that the probability of observing no losses is a (decreasing) function only of the $p_i$. The formal proof can be found in appendix \ref{A}. However the intuition behind the result is straightforward as the only way we experience no losses is that every name survives idiosyncratically (default by contagion requires at least one idiosyncratic default).
\begin{proposition}\label{pr:NoLosses}
	The probability of observing no losses is given by the following result:
	\begin{equation} \label{eq:NoLosses}
		P\{ L_n (t)= 0\} = \prod_{j=1}^{n} (1-p_j(t)).
	\end{equation}
\end{proposition}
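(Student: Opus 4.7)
The plan is to translate the event $\{L_n(t)=0\}$ into an event on the underlying Bernoulli building blocks $X_i, U_i, V_i$, and then exploit their mutual independence.

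First I would observe that, since $d_j \geq 1$ and $Z_j \in \{0,1\}$, the event $\{L_n(t)=0\}$ is equivalent to $\{Z_j(t)=0 \text{ for all } j=1,\dots,n\}$. So the task reduces to computing $P\{Z_1=\dots=Z_n=0\}$.

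Next I would show that $\{Z_j=0 \text{ for all } j\} = \{X_j=0 \text{ for all } j\}$. The inclusion $\supseteq$ is immediate from Equation \eqref{eq:MainModDesc}: if every $X_j=0$, then for each $i$ the factor $X_j V_j$ vanishes in the product, so the contagion bracket becomes $1 - \prod_{j\neq i}(1-0) = 0$, and hence $Z_i = 0 + (1-0)(1-U_i)\cdot 0 = 0$. For the reverse inclusion $\subseteq$, suppose at least one $X_k=1$. Then from \eqref{eq:MainModDesc} applied to $i=k$, the contagion term is multiplied by $(1-X_k)=0$, but the leading $X_k=1$ term forces $Z_k = 1$, so the all-survive event fails. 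Hence $\{L_n=0\}$ coincides exactly with $\{X_j=0 \ \forall j\}$.

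Finally, by the assumed mutual independence of $X_1,\dots,X_n$ with $P\{X_j=1\}=p_j(t)$, I conclude
\begin{equation*}
P\{L_n(t)=0\} = P\{X_1=0,\dots,X_n=0\} = \prod_{j=1}^{n}(1-p_j(t)),
\end{equation*}
which is \eqref{eq:NoLosses}. There is no real obstacle here: the only subtle point is the equivalence of events in the second step, which must be argued in both directions. Notably, the variables $U_i$ and $V_j$ drop out entirely, consistent with the intuition that avoiding any loss requires the absence of idiosyncratic defaults, which in turn precludes any contagion from being triggered.
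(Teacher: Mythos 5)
Your proof is correct, but it takes a genuinely different route from the paper's. The paper proves Proposition \ref{pr:NoLosses} in Appendix \ref{A} by induction on $n$ through the recursion machinery of Proposition \ref{portfolio}: the base case $n=2$ is verified by explicitly summing the four terms $\alpha_{2,t}(0,0)+\alpha_{2,t}(0,d_1)+\alpha_{2,t}(0,d_2)+\alpha_{2,t}(0,d_1+d_2)$ (observing that the $u$-terms telescope to $1$), and the inductive step uses $P\{L_n(t)=0\}=P\{L_n(t)=0\mid L_{n-1}(t)=0\}\cdot P\{L_{n-1}(t)=0\}$ together with $P\{L_n(t)=0\mid L_{n-1}(t)=0\}=(1-p_n(t))$. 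You instead argue directly on the defining equation \eqref{eq:MainModDesc}, proving the event identity $\{L_n(t)=0\}=\{Z_j=0\ \forall j\}=\{X_j=0\ \forall j\}$ in both directions and then invoking independence of the $X_j$ alone. Your argument is shorter, does not depend on the correctness of the recursion \eqref{alfabeta}, and in fact formalizes precisely the intuition the paper states informally just before the proposition (``default by contagion requires at least one idiosyncratic default''); it also makes transparent why $U_i$ and $V_j$ play no role. What the paper's route buys is a consistency check: deriving the closed form through the $\alpha$ coefficients validates the recursion itself. One small point common to both arguments: your step ``$d_j\geq 1$'' is implicit in the paper (the $d_i$ are units of losses upon default), and it is genuinely needed --- if some $d_j$ were zero, the event equivalence in your second step would fail in the $\subseteq$ direction, so stating it explicitly, as you do, is the cleaner practice.
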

Using the above result, we can be sure that the probability of observing no losses will decrease in time if the probabilities of idiosyncratic default in the two specifications of the model are increasing i.e. $$t_1<t_2, \quad p_i(t_1)\leq p_i(t_2),\quad \forall i \Longrightarrow P\{ L_n(t_1) = 0\} \geq P\{ L_n(t_2) = 0\}.$$

\section{Portfolio loss distribution with contagion and correlations} \label{sec:mixture_model}
In the proposed model, contagion is  the only source of dependence between defaults. More realistically, defaults can present a dependence structure due not only to contagion, but also to the effect of common factors affecting the value of the companies (i.e. we introduce a systematic component), and consequently the default probabilities. We develop here two approaches for the construction of a realistic loss distribution for a credit portfolio. The first is a natural extension of the well known one factor Gaussian model, where contagion is applied conditionally to the effect of a common systematic factor. The second model considers a mixture distribution, under the assumption that two states of the world may manifest in the future, one in which defaults are correlated due to a common Gaussian factor, and another in which dependency is due to contagion. Both methods will be considered in the empirical analysis for the pricing of synthetic CDOs.

\subsection{Conditional contagion loss distribution} \label{sec:cond_contagion}
In the first approach the defaults are driven by a common latent factor, and contagion is applied on top of the effect of the factor. 
We consider a credit portfolio in which default risk is modeled through a structural Vasicek-type framework \citep{vasicek2002distribution}. For each obligor $ i$, the default occurs between time $0$ and $t$ when a latent variable $Z_i $ falls below a fixed threshold $ \theta_i $. The latent variable follows the standard one-factor Gaussian model:
\[
Z_i = \sqrt{\rho} \, Y + \sqrt{1 - \rho} \, \varepsilon_i,
\]
where $ Y \sim \mathcal{N}(0,1) $ is a systematic risk factor common to all obligors, $ \varepsilon_i \sim \mathcal{N}(0,1) $ are idiosyncratic shocks independent across $ i $ independent of $Y$, and $\rho$ is the correlation between each couple of latent variables $Z_i$, $Z_j$. Before considering contagion --  defaults are independent conditional on the realization of $ Y = y $, and the default probability for obligor $ i $  between times $0$ and $t$ becomes
\[
p_i(t)|y = \mathbb{P}(Z_i \leq \theta_i \mid Y = y) = \Phi\left( \frac{\theta_i - \sqrt{\rho} \, y}{\sqrt{1 - \rho}} \right),
\]
where $ \Phi $ denotes the standard normal cumulative distribution function.

To capture systemic contagion effects, we extend this model by allowing defaulted entities to transmit distress to others. While defaults remain conditionally independent given  $Y $, once defaults occur, they can trigger additional defaults through the contagion mechanism presented in Section \ref{sec:contagion}, dependent on the initial set of defaults. 
In practice, we discretize the Gaussian distribution of $Y$. Specifically, we use a Gauss-Hermite quadrature and we approximate the unconditional expectation over $Y$ as
\[
\mathbb{E}[f(Y)] \approx \sum_{j=1}^m w_j f(y_j),
\]
where $y_j$ and $w_j$ are the nodes and weights of the $m$-point Gauss-Hermite quadrature rule. The contagion-adjusted conditional loss distribution $P\{L_n^{(COND)}(t)=h\mid Y = y_j\}$ is then computed for each discretized state $ y_j $ of the systematic factor: for each $y_j$, we compute the conditional default probabilities $p_i(t)\vert y_j$, run the contagion model to obtain the conditional loss distribution, and finally integrate over the values of $Y$ to obtain the unconditional loss distribution:
\[
P\{L_n^{(COND)}(t)=h\} \approx \sum_{j=1}^m w_j P\{L_n^{(COND)}(t)=h\mid Y = y_j\}.
\]

Since the contagion model has to be evaluated for every value of the conditioning variable $Y$, the computational burder scales with the number of nodes in the Gauss-Hermite quadrature. In the empirical application, balancing the quality of the approximation and the computational burden, we set $m =10$.

\subsection{Mixture contagion loss distribution} \label{sec:mix_contagion}
The second approach for the construction of the loss curve uses a mixture distribution for the portfolio losses to account for both the contagion mechanism and default correlations induced by a common Gaussian factor affect the loss distribution. The idea is to consider a system with two possible regimes: between times $0$ and $t$, we assume that the world is in any of two states: a ``contagion state'' with probability $\pi$, and a ``common factor state or correlated default state'' with probability $1-\pi$. In the former, defaults happen either  idiosyncratically, or according to the contagion described in Section \ref{sec:loss_distr}. In the latter, defaults are driven by a traditional one factor Gaussian model. From an economic standpoint, the approach is consistent with a market in which the two potential states can manifest, players are uncertain about the possible future state of the world, and assign a probability to each of them.

This approach allows us to model in a simple and flexible way the loss distribution, and it offers a straightforward structural interpretation for both the ``contagion'' and ``correlated default state''. On the flip side, the simplification of assuming independence between the two states does not allow to account for the interaction effects, failing to model for instance an increase of the contagion risk as a consequence of the negative movement of a common market factor. 

Formally, we model the random variable for the losses $L_n^{(MIX)}(t)$ as:

\begin{equation}
	L_n^{(MIX)}(t) = \xi L_n^{(CON)}(t) + (1-\xi) L_n^{(OFG)}(t), \label{eq:mix_model}
\end{equation}

where $L_n^{(CON)}(t)$ and $L_n^{(OFG)}(t)$ are the losses variable in the contagion state, and in the correlated default state, respectively, and $\xi$ is a Bernoulli variable such that $\xi=1$ with probability $\pi$ and $\xi=0$ otherwise. Due to independence between $L_n^{(MCM)}(t)$ and $L_n^{(OFG)}(t)$, the distribution function is a mixture distribution:
\begin{equation}
	P\{L_n^{(MIX)}(t)=h\} = \pi P\{{L}_n^{(CON)}(t)=h\} + (1-\pi) P\{{L}_n^{(OFG)}(t)=h\}.
\end{equation}

In this work, for the ``correlated default state'' we opted for a simple one factor Gaussian model as it is  easy to calibrate, and well known among regulators and practitioners, but we could choose an alternative model that accounts for fatter tails such as the one proposed in \cite{kalemanova2007normal} that assumes a NIG distribution if the fit of the model is not satisfactory. As discussed in Section \ref{sec:calibration}, we consider a restricted version of the model with three parameters that allows us to satisfactory calibrate the quotes of iTraxx Synthetic CDO index tranches. 

We underline that mixture models are commonly used in the modelization of portfolio credit risk (see e.g. \textit{CreditRisk+}, \citealp{suisse1997creditrisk+}), although in a different way: \cite{frey2003dependent} extensively discuss Bernoulli mixture models in which defaults present a conditional independence structure conditional on common factors, showing how latent variable models such as KMV \citep{kealhofer2001portfolio} and RiskMetrics \citep{morgan1997creditmetrics} can be mapped to equivalent mixture models. In such cases, the conditional distribution of portfolio losses is integrated on the value of the conditioning factors. In our approach, instead, the loss distribution in each of the two possible states of the world (``contagion'' and  ``correlated default'') is integrated on the distribution of the mixing binary variable $\xi$. 

Conceptually, our two-state approach is related to the literature on regime-switching models, that are statistical frameworks in which parameters or dynamics change according to an unobserved state variable that evolves over time, typically governed by a Markov process. such models have been used in systemic risk analysis (see e.g. \citealp{abdymomunov2013regime,bernardi2019switching}) to address some observed market characteristics. In particular, the literature documents the presence of structural breaks in the risk transmission in crises periods (\citealp{billio2012econometric}) and default clustering \citep{berndt2010correlation,azizpour2018exploring,montagna2020origin}. Regime switching models have also been widely used in credit risk measurement \citep[see e.g.][]{davies2004credit,liang2012reduced}. An approach similar to ours is the one proposed by \cite{davidson2020interdependence}, who develops a Bayesian time-varying parameter VAR model to distinguish between interdependence, contagion via interdependence, and abrupt contagion using a model-switching framework in the context of Latin American crises. Although her focus is on macro‑financial linkages, this typology shares the conceptual intuition behind our two‑regime structure. In contrast, our model explicitly differentiates default dynamics through a single-factor model during peace periods and a contagion‑driven model during crises.

\subsection{Default correlations} \label{sec:def_correl}
To further describe the models, we propose here a homogeneous example and we characterize analytically the pairwise default correlation between any two assets.

\begin{proposition} \label{prop:correl}
	 Consider a portfolio with $n$ assets in which defaults are regulated by the infection model in Section \ref{sec:contagion} and such that $p_i = p,\; v_i = v,\; u_i = u , \tilde{p}_i = \tilde{p}\; \forall i$. The pairwise default correlation of $Z_i$ and $Z_j$ is:
	
\begin{equation}
	\rho_{CON}(Z_1,Z_2) = \frac{P(Z_i = 1,Z_j=1) -\tilde{p}^2}{\tilde{p}(1-\tilde{p})},
\end{equation}
where $\tilde{p}$ is the marginal default probability of each name (see Eq. \ref{eq: PD}), and the probability of joint defaults of asset with indices $i$ and $j$ is:
\begin{equation}\label{eq:joint_def}
	P(Z_i = 1,Z_j=1) = p^2 + 2p(1-p)(1-u)\left[1-(1-v)(1-pv)^{n-2}\right] + (1-p)^2(1-u)^2\left[1-(1-pv)^{n-2}\right].
\end{equation}	
\end{proposition}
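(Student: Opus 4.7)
The correlation formula itself is immediate: for Bernoulli variables $Z_i, Z_j$ with $E[Z_i] = E[Z_j] = \tilde{p}$, we have $\mathrm{Cov}(Z_i,Z_j) = E[Z_iZ_j] - \tilde{p}^2 = P(Z_i=1,Z_j=1) - \tilde{p}^2$ and $\mathrm{Var}(Z_i) = \tilde{p}(1-\tilde{p})$, so the only real content of the proposition is the formula \eqref{eq:joint_def} for the joint default probability. My plan is to compute it by decomposing the event $\{Z_i=1, Z_j=1\}$ into disjoint "causal" pieces indexed by how each of the two names defaults (idiosyncratically versus by contagion), then exploit the mutual independence of the underlying $X$, $U$, $V$ variables postulated in the model.

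From equation \eqref{eq:MainModDesc}, for each name $k$ the event $\{Z_k=1\}$ decomposes as the disjoint union of $A_k := \{X_k=1\}$ and $B_k := \{X_k=0,\, U_k=0,\, \exists\, \ell\neq k\text{ with } X_\ell V_\ell=1\}$. Therefore
\begin{equation*}
\{Z_i=1,Z_j=1\} = (A_i\cap A_j)\ \sqcup\ (A_i\cap B_j)\ \sqcup\ (B_i\cap A_j)\ \sqcup\ (B_i\cap B_j),
\end{equation*}
and I will compute the probability of each piece separately. The first is trivial by independence of $X_i$ and $X_j$, giving $p^2$. For $B_i\cap B_j$, conditioning on $X_i=X_j=0$ kills the contributions of $i$ and $j$ to the infection product, so the "there exists an infector" event becomes $\{\exists\, \ell\in\{1,\dots,n\}\setminus\{i,j\}:\ X_\ell V_\ell=1\}$, whose probability is $1-(1-pv)^{n-2}$; multiplying by the probability $(1-p)^2(1-u)^2$ of the remaining independent conditions yields the third term.

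The piece that requires a bit of care, and where I expect the main bookkeeping obstacle, is the mixed term $A_i\cap B_j$ (and its symmetric twin). Here $X_i=1$, so $i$ itself is a potential infector of $j$: the event that \textbf{no} infection reaches $j$ becomes $\{V_i=0\}\cap\{X_\ell V_\ell=0\ \forall\, \ell\neq i,j\}$, which has probability $(1-v)(1-pv)^{n-2}$ by independence. Combined with the factors $p$ (for $X_i=1$), $1-p$ (for $X_j=0$), and $1-u$ (for $U_j=0$), this gives $p(1-p)(1-u)\bigl[1-(1-v)(1-pv)^{n-2}\bigr]$, and the symmetric case $B_i\cap A_j$ contributes the same quantity by exchanging the roles of $i$ and $j$, producing the factor of $2$ in the middle term of \eqref{eq:joint_def}. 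Summing the four disjoint contributions gives the formula, and dividing $P(Z_i=1,Z_j=1)-\tilde p^2$ by $\tilde p(1-\tilde p)$ completes the proof. The only subtlety to watch is keeping the treatment of whether $i$ (resp. $j$) can infect itself the other name straight; beyond this the argument is a clean application of the independence structure built into \eqref{eq:MainModDesc}.
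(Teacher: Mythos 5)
Your proposal is correct and follows essentially the same route as the paper's proof: your disjoint decomposition into the pieces $A_i\cap A_j$, $A_i\cap B_j$, $B_i\cap A_j$, $B_i\cap B_j$ is exactly the paper's partition on the four configurations of $(X_i,X_j)$, with the same independence factorizations and the same key subtlety handled identically --- in the mixed case the conditioning $X_i=1$ makes the no-infection factor $(1-v)(1-pv)^{n-2}$ rather than $(1-pv)^{n-1}$. The reduction of the correlation formula to Bernoulli moments is likewise the same, so there is nothing to add.
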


The result is proved in Appendix \ref{A}.

Using the law of total covariance we can then compute the correlation between defaults also for the conditional contagion (COND) and mixture contagion (MIX) models (Section \ref{sec:mix_contagion}). For conditional contagion model the covariance between two defaults is:

\begin{align}
	\mathbb{C}_{COND}(Z_i, Z_j) & = \mathbb{E}\left[\mathbb{C}(Z_i, Z_j | Y) \right] + \mathbb{C}\left(\mathbb{E}[Z_i | Y], \mathbb{E}[Z_j | Y]\right)
\end{align}

where the first term is computed averaging the covariances of the contagion model in each of the conditioning state of the conditioning variable $Y$, and the second term corresponds to the variance of the expected value of defaults, since the default probabilities are uniform across the portfolio.

Pairwise default correlation can then be computed by dividing by the standard deviations of the univariate distributions:
\begin{equation}
	\rho_{COND}(Z_1,Z_2) = \frac{\mathbb{C}_{COND}(Z_i, Z_j)}{\sqrt{\sigma_{\tilde{p}_i} \sigma_{\tilde{p}_j}}} = \frac{\mathbb{C}_{COND}(Z_i, Z_j)}{\tilde{p}(1-\tilde{p})}.
\end{equation}

The default correlation of the mixture model in Section \ref{sec:cond_contagion} can be computed in a similar way, conditioning on the mixing variable $\pi$.

\section{An application to CDO pricing}
In this section we test our approach by pricing synthetic CDO contracts. Handling credit portfolio products requires the ability to model and calculate the entire portfolio loss distribution at several time steps. This application will hence show the tractability of the system of equations \eqref{alfabeta}-\eqref{boundarycond}, and the quality of fit of the contagion, the conditional, and the mixture models, in comparison with the one factor Gaussian model that serves as a baseline benchmark.

\subsection{CDO pricing}
Let us start by giving few basic information regarding CDO products, in particular synthetic ones. A synthetic CDO contract is a complex structured credit product that is written between two parties, the protection buyer and the protection seller. Differently from cash CDOs, that use mortgages and bonds, synthetic CDOs use CDSs instruments. The two parties will exchange cash payments based on the survival/default events occurring in a pool of underlying entities. The pool is sliced into tranches characterized by an attachment and detachment point that determine the subordination of the tranche. The notional of a given tranche starts to get eroded when the total losses suffered in the pool -- due to the underlying names defaulting -- are above the tranche attachment. The protection seller receives periodical payments calculated as a premium (spread) multiplied by the amount of outstanding notional left in the tranche at the payment time. In return for the premium payments, the protection seller has to compensate the protection buyer for the losses occurred in the tranche. A CDO trade has hence two legs; the coupon leg represents the expected value of the payments made by the protection buyer while the offsetting leg is the expected value of the default payments made by the protection seller and it is referred to as the protection leg. \newline

Let us define, for a CDO tranche with attachment and detachment $(a, b)$, the amount of outstanding notional left at time $t$ as the following quantity $S(a, b, t)$:
\begin{equation}S(a, b, t) := (b-a) \cdot P\{L_n(t) \leq a\} + \int_{a}^{b} (b-x) \cdot P\{L_n(t) = x\} dx, \end{equation}
where $L_n(t)$ represents the portfolio losses at time $t$. The CDO pricing formulas can be expressed in terms of $S(a, b, t)$:
\begin{equation}\label{eqCpnLegForm} CpnLeg = \text{cpn} \cdot \sum_i dcf(t_i, t_{i+1}) \cdot D(t_{i+1}) \cdot S(a, b, t_{i+1}),\end{equation}
\begin{equation}V_{CDO} = CpnLeg - DfltLeg,\end{equation}
\begin{equation}DfltLeg = \int_0^M -\frac{\partial S(a, b, t)}{\partial t} \cdot D(t) dt, \end{equation}

where $M$ is the trade maturity, $D(t)$ is the discount factor at time $t$, $dcf(t, s)$ is the day count fraction between the coupon dates $t$ and $s$ and the sum over $i$ in \eqref{eqCpnLegForm} is intended over the scheduled coupon payment dates.  Note how the term $\frac{\partial S(att, det, t)}{\partial t}$ represents the losses occurred at time $t$. Note also that we wrote the value of the CDO from the protection buyer point of view. Finally, the par spread of the trade is calculated as the fair value of the $cpn$, i.e. the level of $cpn$ that makes zero the present value of the trade. The ability to calculate $L_n(t)$ (and hence $S(a, b, t)$ for every tranche seniority structure) in an efficient way is crucial in order to calculate the value of a CDO trade.\newline

\subsection{Specifications and calibration of the loss distribution}\label{sec:calibration}

In the empirical analysis we compare four classes of models for the estimation of the loss distribution: 
\begin{itemize}
	\item \textbf{One factor Gaussian model (OFG)} in which default correlations are modelled using a single Gaussian factor;
	\item \textbf{Contagion models (CON)} in which defaults are either idiosyncratic or due to contagion;
	\item \textbf{Conditional models (COND)} in which contagion is applied on top of a single factor model (contagion conditional to correlated defaults).
	\item \textbf{Mix models (MIX)} in which two states of the world can manifest (contagion and correlated defaults).
\end{itemize}

The OFG model is a single-parameter model widely used in the industry for its simplicity and tractability. It is also considered a \textit{lingua franca} among practitioners: the OFG base correlation is used on CDO trading desks in much the same way the Black-Scholes implied volatility is used by option traders. Readers unfamiliar with the OFG model might refer to \cite{li2000default} for the original description or to \cite{burtschell2009comparative} for a good review of the methodology. Let just remember here that in the OFG framework a single correlation input $\rho\in[0,1]$ describes the likelihood of names to default together. In practice, though, it is well known that the model with a single parameter $\rho$ cannot explain the prices of the many tranches one observes on the market.\footnote{The standard solution is to resort to the base correlation approach. Every tranche is priced as a difference of two equity tranches, i.e. tranches with attachment set at zero (for example, a $3\%-6\%$ tranche will be valued as the difference of a $0\%-6\%$ tranche minus a $0\%-3\%$ one). These two valuations are treated as independent from each other and hence different specification of the model are used (i.e. different values of $\rho$ are employed).  
	This approach, known as base correlation, has the drawback of creating the potential for mispricing for non-standard tranches.} For simplicity we assume uniform default probabilities, that are set equal to the average of the default probabilities $\tilde{p}_i(t)$ estimated from the single name CDS spreads of the index contituents.

Concerning the Contagion models (CON), the full implementation of the model in Equation (\ref{eq:MainModDesc}) would require the calibration of $3n$ parameters for any time $t$ at which cash flows are exchanged, and would be impractical. We propose thus a restricted version of the model with only a single calibrated parameter $\omega$ that controls how much probability of default comes from idiosyncratic effects versus contagion, and $n$ fixed parameters $\mu_i, \, i = 1,\dots, n$ to account for the different contagion potential of firms in the index. In particular, we assume that the quantities $p(t)$, $v(t)$ and $u(t)$ are linked via the following relationships to the marginal default probabilities $\tilde{p}_i(t)$, the estimated probability of default of name $i$:
\begin{equation}
	\begin{array}{l} \label{eq:calibration}
		p_i(t) = (1-\omega) \cdot \tilde{p}_i(t), \\
		v_i(t) = \mu_i \cdot [1-\sqrt{\tilde{p}_i(t)}], \\
		u_i(t) = 1- \frac{ \tilde{p}_i(t) - p_i(t)}{[1-p_i(t)] \cdot I_{\overline{ \{i\} } } }.
	\end{array}
\end{equation}
The coefficients $\omega$ and $\mu_i$ are assumed to be constant for all the times $t$ at which cash flows are exchanged, and the proposed relationship are motivated by economic considerations:
\begin{itemize}
	\item The individual probability of default contains information on both idiosyncratic and contagion risk, in a proportion regulated by the parameter $\omega$. In particular, $\omega=0$ represents the no-contagion case, while a higher value for $\omega$ causes most of the losses to derive from contagion events. Note that we need $\omega < 1$ as we always need at least one initial idiosyncratic event to trigger contagion effects;
	\item The quantities $v_i(t)$ are in a monotonic decreasing relationship with $\tilde{p}_i(t)$; this reflects the fact that healthier firms have a bigger impact in case of idiosyncratic default than riskier ones; the market is expecting default of risky firms (hence the high probability of default) and therefore the shock when the event finally happen is minor. We  restrict the parameter $\mu_i$ to be either constant across all the names, or differentiated according to industrial sector, obtaining three variants of the model (see below); the square-root transformation applied to $\tilde{p}_i(t)$ amplifies variation in the lower range of values and redistributes mass across the support. This choice performs well empirically and alternative concave transformations  yield similar results;
	\item The quantities $u_i(t)$ are chosen according to \eqref{eq: PD} in order to satisfy marginal constraints. It is worth noting that we get the calibration to marginal info embedded in our multivariate model almost by construction;
	\item The idiosyncratic probability of default $p_i(t)$ is, for every $i$, directly proportional to $\tilde{p}_i(t)$ and hence it is increasing in time. Thanks to the comments made after proposition  \ref{pr:NoLosses}, with this choice we also guarantee that the probability of observing no losses is a decreasing function of time;
\end{itemize}
We underline that this specification of the model is still handling heterogeneous portfolios (the marginal probability of defaults of the names are different from each other). With above choices we effectively managed to reduce the number of parameters to one: $\omega$.

For the conditional model (COND) we discretize the distribution of the conditioning factor in 10 buckets, and we calibrate two parameters, assumed constant for every $t$: $\rho$ (i.e. the correlation in the latent variable for the one factor Gaussian model) and $\omega$ (representing the proportion of default probability due to contagion). For each value of the discretized distribution of the factor we compute $p_i(t)$, $v_i(t)$, and $u_i(t)$ as in Eq. \ref{eq:calibration} in such a way that the proportion between the conditional default probability due to contagion is always equal to $\omega$.

Finally, the mix model (MIX) has three calibrated parameters: $\rho$ and $\omega$ for the  ``correlated defaults'' and ``contagion states'', respectively, and the mixing probability $\pi$ (see Eq. \ref{eq:mix_model}).

\subsubsection{Choice of the parameters $\mu_i$}\label{sec:choice_mu}

As discussed above, we limit the calibration of the contagion model to one parameter, while maintaining a certain degree of heterogeneity and using the information relative to the idiosyncratic default probabilities $\tilde{p}_i(t)$. We can further tune the model by restricting the values of $\mu_i$ depending on economic considerations. We consider three alternatives: first we set the value constant for all the names in the index (model CON-FLAT). In particular, we set the value to $\mu_i = 0.1$ for all $i$. We then consider two alternative calibrations that attribute a larger potential for the transmission of contagion to banking and financial institutions, as the literature on systemic risk highlights the centrality of such sectors in the amplification of risk and the transmission of financial distress to the whole economic system \citep{schwarcz2008systemic,freixas2015systemic,montagna2020origin}. In particular, we consider the model CON-BNK that set the coefficient $\mu_i=0.2$ for companies in the banking sector and $\mu_i = 0.05$ for other companies, and the model CON-FIN that set the coefficient $\mu_i=0.2$ for companies in the banking and financial sector and $\mu_i = 0.05$ for others. The same applies to the mix and conditional models, for which we consider the three variants each: (COND-FLAT), (COND-BNK), (COND-FIN), (MIX-FLAT), (MIX-BNK), and (MIX-FIN). Alternatively, we could have also parametrized $\mu_i$ to some market based measure of systemic relevance, such as the  the connectedness measures computed by network measures computed by \cite{diebold2012better} using volatility spillover networks. Such measures however are typically computed on historical data, while we wanted to use forward looking measures.

In preliminary analyses we also tested richer parametrizations for the contagion models that included one or two additional parameters to control the size of $\mu_i$ across sectors, but we maintained the proposed parametrization due to the satisfactory empirical fit, and model parsimony. In Appendix \ref{sec:mus} we test the pricing of iTraxx tranches using alternative values of $\mu_i$ to assess the effect of different parameter choices on pricing errors.

\section{Simulation study}

We present a simulation study to highlight the characteristics of the proposed models, and to compare the resulting loss distributions. For simplicity we consider a homogeneous setting with uniform marginal default probabilities $p_i = p, \; \forall i$ and parameters across the assets in the portfolio. We also omit the dependency of the parameters to time to simplify the notation.

Figure \ref{fig:sens_p_mu_omega} represents sensitivity of the loss distribution of the CON model to the parameters $\tilde{p}$ (the marginal default probability of each asset), $\omega$ (the proportion of the marginal default probability of each name attributable to contagion), and $\mu$ (the parameter that regulates the infectivity of the defaults). We underline that the interpretation of the parameter $\mu$ is not straightforward. Indeed, not only it regulates the infectivity of the defaults (by changing the parameters $v_i$), but it also indirectly affects the immunization since the parameter $u_i$ in Eq. \ref{eq:calibration}, is set in such a way to preserve the marginal default probability. In contrast, the parameter $\omega$, although it also influences $u_i$ indirectly, retains a straightforward interpretation as the proportion of default probability attributable to contagion. 

\begin{figure}[!h]
	\centering
	\includegraphics[width=1\textwidth]{./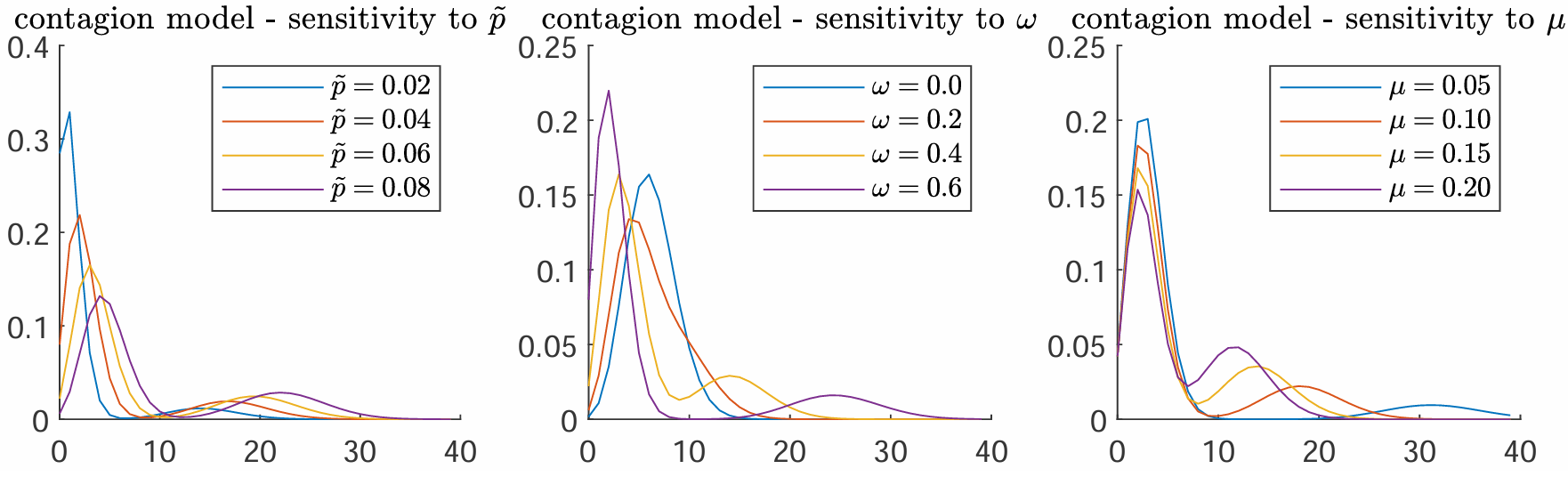}
	\caption{Sensitivity of the loss distribution to parameters $\tilde{p}$, $\omega$, $\mu$. The baseline parameters for the distribution are $\tilde{p} = 0.05$, $\omega=0.5$, $\mu = 0.1$, and there are 125 names in the portfolio. The horizontal axis represents the number of defaults and, although the maximum possible number is 125, only the range up to 40 is displayed for clarity. Parameters are homogeneous across all names. }
	\label{fig:sens_p_mu_omega}
\end{figure} 

We see that overall, the resulting distributions are bimodal, in line with the presence of two distinct possible outcomes depending on the presence or absence of an active infection. Specifically, in the first panel we see the effect of the parameter $\tilde{p}$, finding that the increase of the marginal default probability shifts the distribution to the right. In the middle panel we notice that an increase of the value of $\omega$ amplifies the bimodality of the distribution separating the peaks. Finally, on the right we see that changes in the value  of $\mu$ affect mostly the second peak of the distribution without affecting significantly the main peak. Overall, the contagion model allows to account for flexible shapes of the loss distribution. We point out that there is the potential for identifiability issues due to the similar effects of the parameters $\mu$ and $\omega$ on the loss curve. Thus, in the empirical analysis, $\mu_i$ was held constant to ensure model parsimony.

\begin{table}[!h]
	\centering
	\begin{tabular}{ c | c c c  ccl}
		\hline
		Model      & $\tilde{p}$  &  $\omega$ & $\mu$ & $\rho$& $\pi$& n\\ \hline
		CON  & 0.05 & 0.6 & 0.1 & - & - & 125 \\
		OFG   & 0.05 & - & - & 0.28 & - & 125 \\
		COND & 0.05 & 0.4 & 0.1 & 0.175 & - & 125 \\  
		MIX   & 0.05 & 0.6 & 0.1 & 0.28 & 0.5 & 125 \\ \hline
	\end{tabular}
	\caption{Parameters used for the model comparison: $\tilde{p}$ (marginal default probability), $\omega$ (proportion of marginal default probability attributable to infection in contagion model), $\mu$ (parameter that regulates the infectivity of each name, see Eq. \ref{eq:calibration}), $\rho$ (correlation of the latent variables in the OFG model), $\pi$ (the mixing parameter for the MIX model) and $n$ (the number of assets in the portfolio). Parameters are homogeneous across all names.}
	\label{tab:loss_distr_parameters}
\end{table}
	
Next, we compare the loss distributions under different models: pure contagion (CON), one factor Gaussian copula (OFG), conditional model (COND), and mixture model (MIX). To make the comparison fair, we set the parameters of each distribution in such a way that the expected losses (EL) and the pairwise default correlation between any two assets ($\rho(Z_i,Z_j)$) are the same for each model. LGD is set to 100\%, and the parameters for each distribution are reported in Table \ref{tab:loss_distr_parameters}. In particular, we set a baseline parametrization for the CON model, and we adjusted the correlation parameter $\rho$ in the models. Default correlation is computed for CON, COND, and MIX models as described in Section \ref{sec:def_correl}, while for the OFG model we computed them analytically as described e.g. in \cite{vasicek2002distribution}.

\begin{table}[!h]
	\centering
	\begin{tabular}{ c | c c c  c}
		\hline
		Model      & $\rho(Z_i,Z_j)$ (\%) &  EL (\%) & UL (\%) & Credit VaR$_{0.95}$ (\%)\\ \hline
		CON  &            9.7        &     5.0  &     7.1 &            25.6       \\  
		OFG   &            9.7        &     5.0  &     3.8 &            17.6       \\  
		COND &            9.7        &     5.0  &     7.0 &            32.0       \\
		MIX   &            9.7        &     5.0  &     5.7 &            24.0       \\    \hline
	\end{tabular}
	\caption{Comparison of homogeneous portfolio models: Pairwise default correlation ($\rho(Z_i,Z_j)$), expected loss (EL), unexpected loss (UL), Credit VaR$_{0.95}$.}
\label{tab:loss_distr}
\end{table}

\begin{figure}[!h]
	\centering
	\includegraphics[width=0.8\textwidth]{./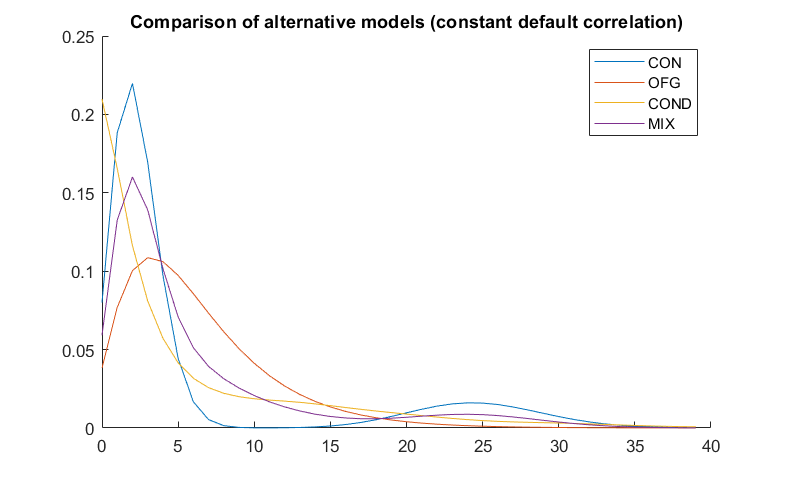}
	\caption{Loss distribution for different models with the parameters in Table \ref{tab:loss_distr_parameters}. The horizontal axis represents the number of defaults and, although the maximum possible number is 125, only the range up to 40 is displayed for clarity. }
	\label{fig:loss_distr}
\end{figure} 

Figure \ref{fig:loss_distr} shows the loss distributions, allowing to compare the models. OFG is characterized by a skewed shape, and a relatively thin tail. In contrast, both CON and MIX exhibit a bimodal shape (in the latter case the bimodality is less pronunciated due to the mixing with the OFG curve). Finally, the conditional model COND has a thicker tail compared to the OFG distribution. Table \ref{tab:loss_distr} reports descriptive statistics of the distribution: Pairwise correlation of the assets ($\rho(Z_i,Z_j)$); EL (Expected Losses, i.e. the proportion of the initial capital lost due to defaults); UL (Unexpected Losses, i.e. the standard deviation of the losses); and Credit VaR (Credit Value at Risk, i.e. the potential maximum losses at a given confidence level). The models, despite having the same pairwise default correlations and expected losses, have different levels of tail risk, with the CON and COND having the highest unexpected losses (UL) and Credit VaR. Showing their capabilities of modelling portfolios with higher tail risk.

\section{Empirical analysis}

\subsection{Dataset}

We consider synthetic CDO tranches of the iTraxx Europe index, that is the main reference for the European market and it is the largest in terms of CDS index tranches\footnote{The notional traded in iTraxx Europe index tranches in 2024 was USD 99.1 billion, ranking above the CDX North America Investment Grade, despite the decrease in volumes compared to the previous year (see \textit{S\&P Dow Jones Indices - 2024 \& H1 2025 Fixed Income Index Products Report}, \url{https://www.spglobal.com/spdji/en/documents/commentary/market-commentary-fixed-income-index-products-annual-report-2024-2025-h1.pdf}).}
Tranches are quoted in terms of upfront paid by the protection buyer to enter in a contract with 100 basis point coupon. The main index is instead listed in terms of par spread (i.e. the level of the premium that balances the expected value of the two legs of the contract). The quotes for contracts with 5 years maturity have been downloaded from LSEG Workspace for the index and for four tranches: Equity (0-3\%), Junior Mezzanine (3-6\%), Senior Mezzanine (6-12\%), Senior (12-100\%). For each day we also download from LSEG Workspace the local 5Y CDS spread for the 125 constituents of the index to compute the marginal default probabilities (used to estimate the marginal default probabilities $\tilde{p}_i$), the industry sector to define $\mu_i$ (for CON-BNK we consider companies whose industry sector is ``Banking'', while for CON-FIN we consider all the companies belonging to the ``Banking'', ``Finance'', or ``Insurance'' industry sectors), and the bootstrapped EURIRS swap rates as the risk free spot rates. We assume a 40\% recovery rate following market convention. Every 6 months a new series is released, so that the index composition reflects the market characteristics. We download the dataset with daily frequency from October 31, 2019 to May 30, 2025, and for each date we consider the most recent iTraxx series available. Overall we considered 1457 trading days, excluding 2.8\% of the dates from the analysis due to missing quotes for the CDO tranches.

\subsection{Pricing with contagion and mix models}

We first report a comparison between the ten proposed models (OFG, CON-FLAT, CON-BNK, CON-FIN, COND-FLAT, COND-BNK, COND-FIN, MIX-FLAT, MIX-BNK, MIX-FIN), for three specific dates. We report the quotes for the CDO tranches and for the entire index, together with the mean error.

For the calibration of the model we use MATLAB 2024a, minimizing an objective function consisting in the sum of root squared percentage error of the quotes (upfronts  for the tranches, par-spread for the whole index). Since some of the  quotes were  very close to zero, we added a translation of 0.1 in the quotes used in the objective function to avoid denominator close or equal to zero to improve the numerical stability of the optimization. The objective function is minimized using \texttt{fmincon} from the Optimization Toolbox, and the initial parameters supplied to the solver are $\omega_0 = \rho_0 = \pi_0 = 0.5$. Each optimization of the COND model (the most computationally intensive, since the contagion loss distribution must be computed for each value of the conditioning variable) takes typically less than two minutes (single core) on a Macbook air with M1 processor and 16 GB of RAM. We bound all the parameters between 0.05 and 0.95 to improve the numerical stability, and, with the exception of the OFG model, in the majority of times we do not obtain corner solutions.\footnote{The code used to estimate the loss curve and to price CDOs under the CON, OFG, COND, and MIX models is available from the corresponding author upon request, together with a toy dataset containing the required inputs in the appropriate format.}

\begin{table}[!h]
	\centering
		\begin{tabular}{ c | c c c c | c | c}
		\hline
		\multicolumn{7}{c}{Calibrated CDO quotes - 30/03/2020} \\ \hline
		Model  & 0-3\% (\%) & 3-6\% (\%) & 6-12\% (\%) & 12-100\% (\%) & index (bps) & error (MAE)\\ \hline
OFG          & 28.38  & 18.46 & 12.85 & -1.46 & 118.23 & 12.63\\ \hline
CON-FLAT     & 37.55  &  6.34 &  6.26 & -1.70 & 103.91 &  6.46\\
CON-BNK      & 43.13  &  5.44 &  5.21 & -1.85 & 102.85 &  5.46\\
CON-FIN      & 39.18  &  6.12 &  6.00 & -1.75 & 103.74 &  6.09\\ \hline
COND-FLAT    & 32.48  & 10.62 &  5.89 & -2.37 &  91.07 &  3.85\\
COND-BNK     & 35.86  & 11.11 &  5.20 & -2.46 &  91.00 &  2.90\\
COND-FIN     & 33.46  & 10.77 &  5.71 & -2.39 &  91.12 &  3.59\\ \hline
MIX-FLAT     & 36.69  & 13.26 &  4.31 & -2.64 &  88.53 &  2.04\\
MIX-BNK      & 39.84  & 12.55 &  4.12 & -2.75 &  87.69 &  \textbf{1.04}\\
MIX-FIN      & 37.58  & 13.05 &  4.25 & -2.66 &  88.49 &  1.80\\ \hline
Market quotes& 42.16  & 12.15 &  4.13 & -2.78 &  85.22 &     -\\ \hline
				\multicolumn{7}{c}{Calibrated CDO quotes - 30/06/2021} \\ \hline
OFG          & 12.43  &  5.77 &  2.55 & -3.82 &  46.86 &  3.66\\ \hline
CON-FLAT     & 43.37  & 10.17 &  3.11 & -5.06 &  46.89 &  6.82\\
CON-BNK      & 18.89  &  0.62 &  0.62 & -3.78 &  46.02 &  1.72\\
CON-FIN      & 43.98  &  9.54 &  3.09 & -5.06 &  46.89 &  6.81\\ \hline
COND-FLAT    & 14.44  &  1.55 &  0.86 & -3.56 &  44.98 &  2.72\\
COND-BNK     & 17.75  &  1.46 &  0.51 & -3.62 &  45.44 &  1.90\\
COND-FIN     & 15.23  &  1.52 &  0.79 & -3.57 &  45.17 &  2.51\\ \hline
MIX-FLAT     & 23.44  &  2.11 & -1.25 & -3.76 &  44.69 &  \textbf{0.55}\\
MIX-BNK      & 23.55  &  2.06 & -1.22 & -3.78 &  44.47 &  0.63\\
MIX-FIN      & 23.45  &  2.11 & -1.24 & -3.76 &  44.65 &  0.56\\ \hline
Market quotes& 23.09  &  2.16 & -1.38 & -3.85 &  46.80 &     -\\ \hline
		
		\multicolumn{7}{c}{Calibrated CDO quotes 30/09/2022} \\ \hline
OFG          & 37.02  & 23.51 & 15.75 & -1.30 & 136.15 &  5.72\\ \hline
CON-FLAT     & 57.95  & 14.03 & 12.82 & -1.56 & 134.84 &  2.83\\
CON-BNK      & 66.04  & 14.34 & 11.20 & -1.74 & 134.73 &  4.08\\
CON-FIN      & 60.18  & 14.00 & 12.40 & -1.61 & 134.83 &  3.21\\ \hline
COND-FLAT    & 51.12  & 18.20 & 12.41 & -1.61 & 131.65 &  2.18\\
COND-BNK     & 54.58  & 19.39 & 11.00 & -1.72 & 130.72 &  1.18\\
COND-FIN     & 52.27  & 18.49 & 12.03 & -1.64 & 131.56 &  1.84\\ \hline
MIX-FLAT     & 55.52  & 20.68 &  9.78 & -1.69 & 131.15 &  \textbf{0.82}\\
MIX-BNK      & 56.40  & 20.21 & 10.08 & -1.94 & 126.86 &  2.00\\
MIX-FIN      & 55.76  & 20.58 &  9.80 & -1.78 & 129.52 &  1.23\\ \hline
Market quotes& 54.61  & 20.93 &  9.93 & -1.56 & 133.81 &     -\\ \hline
		\multicolumn{7}{c}{Calibrated CDO quotes 31/03/2025} \\ \hline
OFG          & 16.36  &  8.88 &  5.08 & -3.00 &  65.91 &  4.74\\ \hline
CON-FLAT     & 21.90  &  3.16 &  3.15 & -2.93 &  64.70 &  2.57\\
CON-BNK      & 25.85  &  2.65 &  2.64 & -3.01 &  64.66 &  1.75\\
CON-FIN      & 23.34  &  3.00 &  2.99 & -2.96 &  64.69 &  2.28\\ \hline
COND-FLAT    & 21.11  &  3.94 &  2.82 & -2.98 &  62.98 &  2.48\\
COND-BNK     & 24.42  &  3.92 &  2.36 & -3.05 &  63.18 &  1.68\\
COND-FIN     & 22.35  &  3.93 &  2.67 & -3.01 &  63.09 &  2.18\\ \hline
MIX-FLAT     & 29.31  &  4.90 &  0.52 & -3.14 &  63.00 &  \textbf{0.21}\\
MIX-BNK      & 29.45  &  4.85 &  0.55 & -3.18 &  62.37 &  0.37\\
MIX-FIN      & 29.34  &  4.89 &  0.53 & -3.15 &  62.83 &  0.25\\ \hline
Market quotes& 29.17  &  4.90 &  0.49 & -3.22 &  63.81 &     -\\ \hline

	\end{tabular}
	\caption{iTraxx tranche calibrated quotes. The quotes for the CDO tranches are upfronts (running spread equal to 100 bps) while the ones for the index are par spreads. The lowest Mean Absolute Error (MAE) is highlighted in bold.}
	\label{tab:Calibration}
\end{table}

Table \ref{tab:Calibration} reports the calibrated quotes, as well as the mean absolute errors in four selected dates for the models. We present the results for March 30, 2020 (at the beginning of the spread of COVID pandemic), June 30, 2021 (after the COVID pandemic, and before the Ukranian war and the rise of interest rates), September, 2022 (in a period of high interest rates, strong geopolitical tension due to the war in Ukraina, and high energy prices), and March 31, 2025 (a period marked by relatively low CDS spreads, yet still affected by several factors undermining global financial stability, including the wars in Ukraine and Gaza, as well as ongoing discussions about tariffs). We see that the MIX models show in all cases the best performance, with  very small errors for the tranches and the index. Concerning the choice of the specific MIX model (FLAT, BNK, or FIN), the results are not clear cut, as the ranking changes across the dates presented. As expected, the OFG show significantly worse performance, but we see that also the models with only the contagion component (CON) do not allow to accurately price the derivatives, meaning that the correlation component is also necessary to properly characterize credit default markets. Finally, the COND models ranks halfway between the CON and MIX models in terms of accuracy. From a practical perspective, the superior performance of the MIX models relative to the conditional ones is advantageous, as the former are less computationally intensive.

\begin{figure}[!h]
	\centering
	\includegraphics[width=1\textwidth]{./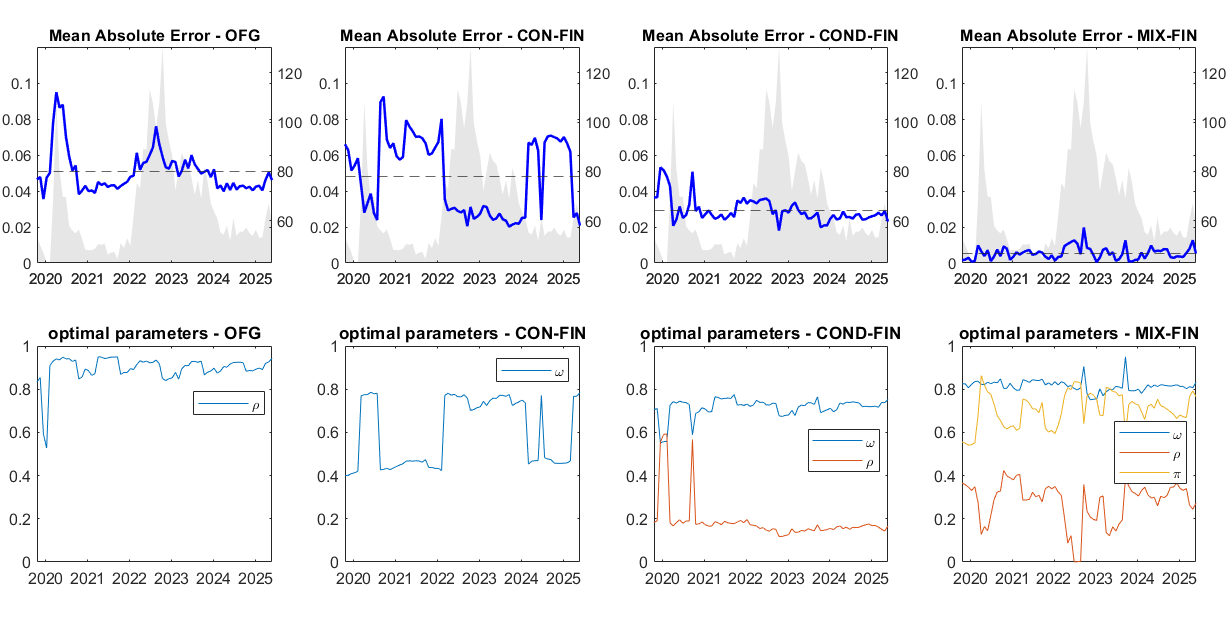}
	\caption{Evolution over time of the average Mean Absolute Error across the priced tranches with spread of the untranched index shaded in the background (upper panels) and optimal parameters (bottom panels) for four selected models (OFG, CON-FIN, COND-FIN, and MIX-FIN models from October 31, 2019 to May 30, 2025.}
	\label{fig:time_series}
\end{figure}

Focusing for brevity on the FIN models, we further study their dynamics by analyzing the evolution over time of the quality of the fit, and the stability of the parameters. Figure \ref{fig:time_series} shows the the Mean Absolute Error (MAE) of the fit for the models from October 31, 2019 to May 31, 2025 (top panels), and  the time series of the optimal parameters (bottom panels). In order to simplify the interpretation of the results, we overlay in gray the spread of the untranched iTraxx index (in basis points), representing the overall dynamics of credit risk in the market.
We see that the MAE for the OFG model follows a dynamic similar to the one of the iTraxx index, meaning that the pricing error of the model is higher when the distress in credit market is high. In contrast, the CON model has a very sharp inverse relation to the iTraxx: in periods of high spreads the pricing error of the pure contagion model is relatively low, but when the credit spreads reduce, the error of the model increases. The performance of the MIX model are consistently good across the period, with a small increase in the error corresponding to the most distressed credit market in 2022, remaining however the best overall model. Finally, the COND model is consistent in terms of MAE across the entire period, but has overall worse performance compared to the FIN model. From an economic perspective, the dynamics of the OFG and CON models in terms of MAE -- together with the superior performance of the MIX models compared to the COND models -- suggests the presence of a regime switch in the market, corroborating the point of using the MIX model.

Concerning the parameters, $\rho$ in the OFG is often close or on the upper boundary (0.95), suggesting that the Gaussian model is not suitable to account for the level of joint default probability implied by the iTraxx tranches market prices. The  parameter of the CON model instead jumps significantly, following the same dynamic of the MAE. Concerning the MIX model, we observe a negative correlation between the dynamics of $\rho$ and $\pi$, while $\omega$ remains high and relatively constant across the entire period. The parameter $\pi$ is always above 0.5, meaning that the ``contagion state'' is predominant over the ``correlated defaults state''. Finally, for the COND model we see that the parameters $\omega$ and $\rho$ tend to be relatively stable and positively correlated, with the exception of two periods of instability at the beginning of the sample.  In Appendix \ref{sec:mus} we test the sensitivity of the models to the parameters $\mu_i$ (that are not calibrated in our specification), finding that the MIX models are the least sensitive to changes, while CON and COND models are more affected by perturbations of such parameters.

To study the models' adaptive boundaries in different market environments, we report the average error in a set of crisis scenarios. In particular, we consider the following 6 scenarios: COVID-19 Sharp Shock (Mar 1 -- Jun 1, 2020), COVID-19 Recovery \& Adaptation	(Jun 1 -- Dec 31, 2020), Post-Pandemic (Jan 1 -- Dec 31, 2021), Ukraine / Energy Crisis (Feb 24 -- Aug 31, 2022), European Interest Rate Tightening (Jul 27 -- Oct 27, 2022), Post-Crisis Normalization (Jan 1, 2023 -- May 31, 2025), and we fit the model each day. Table \ref{tab:periods} reports the average MAE of the days in each period and the corresponding volatility. We see that the MIX models are the best performing in each scenario. As we previously noticed from Figure \ref{fig:time_series}, the OFG tends to perform relatively better in quiet periods, while the contagion models (CON) show lower errors in the harshest periods. We point out that such observation may have implications on the  interpretation of the model: the market perception of the loss curve may be closer to the CON model in distressed period with high perceived probability of default clustering, while it may be closer to the one implied by the Gaussian model in quiter periods. Thus, the proposed parametrization of the models may be used to gauge the market perception of the stability of the market: a higher-than-usual level of $\pi$ in the MIX model, may for instance be studied as a market-based indicator of an increased fear of financial contagion. Still, the focus of this paper is on static loss-distribution comparison. Evaluating model performance in a genuine time-series forecasting setting (e.g., temporal clustering or regime-switching predictability) and formally study the application of the framework to systemic risk analysis is outside the scope of this work and is left for future research.

\begin{table}[!h]
	\centering
	\begin{tabular}{ c | C{1.5cm} C{1.5cm} C{1.5cm} C{1.5cm} C{1.5cm} C{1.5cm}}
			\hline
		Model  & COVID Sharp & COVID Tail & Post Pandemic & Energy Crisis & Int. Rate Rise & Post-Crisis\\ \hline
OFG       & 8.32 & 4.97 & 4.22 & 6.03 & 6.28 & 4.76\\
& (2.24) & (1.03) & (0.45) & (0.79) & (0.69) & (0.76) \\ \hline \hline
CON-FLAT  & 3.51 & 5.68 & 6.93 & 3.14 & 2.89 & 4.23\\
& (1.45) & (2.53) & (0.83) & (1.01) & (0.49) & (2.08) \\ \hline
CON-BNK   & 3.37 & 2.18 & 2.15 & 2.63 & 3.30 & 1.98\\
& (1.05) & (0.39) & (0.43) & (0.68) & (0.82) & (0.39) \\ \hline
CON-FIN   & 4.24 & 7.43 & 6.94 & 3.91 & 2.97 & 5.27\\
& (2.65) & (2.55) & (0.83) & (2.13) & (0.46) & (2.21) \\ \hline \hline
COND-FLAT & 2.96 & 2.96 & 3.00 & 3.37 & 2.77 & 2.66\\
& (0.75) & (0.51) & (0.55) & (0.45) & (0.67) & (0.38) \\ \hline
COND-BNK  & 2.28 & 2.24 & 2.29 & 2.51 & 1.94 & 2.03\\
& (0.77) & (0.35) & (0.40) & (0.44) & (0.62) & (0.36) \\ \hline
COND-FIN  & 3.25 & 3.75 & 3.52 & 3.63 & 3.08 & 2.91\\
& (0.74) & (0.90) & (0.83) & (0.45) & (0.65) & (0.38) \\ \hline \hline
MIX-FLAT  & \textbf{1.02} & 0.54 & 0.51 & \textbf{0.99} & 1.09 & 0.56\\
& (0.65) & (0.40) & (0.24) & (0.54) & (0.57) & (0.39) \\ \hline
MIX-BNK   & 1.10 & 0.66 & 0.58 & 1.52 & 1.67 & 0.66\\
& (0.72) & (0.41) & (0.26) & (0.72) & (0.68) & (0.43) \\ \hline
MIX-FIN   & 1.10 & \textbf{0.50} & \textbf{0.50} & 1.03 & \textbf{1.05} & \textbf{0.54}\\
& (0.64) & (0.39) & (0.26) & (0.49) & (0.55) & (0.38) \\ \hline
		\end{tabular}
\caption{Average mean absolute error (MAE) and its volatility in six periods: COVID-19 Sharp Shock (Mar 1 -- Jun 1, 2020), COVID-19 Recovery \& Adaptation	(Jun 1 -- Dec 31, 2020), Post-Pandemic (Jan 1 -- Dec 31, 2021), Ukraine / Energy Crisis (Feb 24 -- Aug 31, 2022), European Interest Rate Tightening (Jul 27 -- Oct 27, 2022), Post-Crisis Normalization (Jan 1, 2023 -- May 31, 2025). The lowest average MAE for each period is highlighted in bold.}
\label{tab:periods}
\end{table}

Overall, the empirical evidence on the analyzed iTraxx sample suggests that the MIX model, despite not accounting for the possible interaction between factor model and contagion, shows the highest potential for applications, thanks to the good performance and the best computational efficiency compared to the COND model (indeed, despite having 2 rather than 3 parameters, the COND model calibration is more challenging since the contagion loss distribution has to be calculated for each value of the conditioning factor).

\section{Conclusions}
In this paper we presented a new contagion model for the estimation of the loss distribution of a credit portfolio. The model that we introduce is related to the seminal work of \cite{davis2001infectious} and it can achieve reasonable performance with heterogeneous portfolios.

We then introduced two alternative models for the construction of the loss distribution that incorporate the effect of a common factor, either using a mixture distribution or by conditioning on a common Gaussian factor. We apply the model to the problem of pricing synthetic CDO tranches of the iTraxx index, calibrating several specifications of the model on multiple dates with satisfactory performance on the studied dataset. In particular the best performance are obtained using the three parameters mixture models. The models are computationally efficient, and the optimal parameters have a clear economic interpretation.

In addition to fixed-income trading desks, the model has potential applications in the monitoring of systemic risk, as it allows to extract market-based forward looking information on the possible manifestation of systemic events, distinguishing between contagion events and joint defaults due to common factors. We leave the exploration of policy applications and the testing of the empirical properties of the model in other markets (e.g. the \textit{CDX North America Investment Grade} index) to future works.

\newpage

\bibliographystyle{apalike}
\bibliography{Contagion_CDO_googlescholar.bib}

@article{andersen2003all,
  title={All your hedges in one basket},
  author={Andersen, Leif and Sidenius, Jakob and Basu, Susanta},
  journal={Risk Magazine},
  volume={16},
  number={11},
  pages={67--72},
  year={2003},
  publisher={RISK MAGAZINE LIMITED}
}

@article{davis2001infectious,
  title={Infectious defaults},
  author={Davis, M and Lo, V},
  journal={Quantitative Finance},
  volume={1},
  number={4},
  pages={382--387},
  year={2001},
  publisher={Taylor and Francis Journals}
}

@article{egloff2007simple,
  title={A simple model of credit contagion},
  author={Egloff, Daniel and Leippold, Markus and Vanini, Paolo},
  journal={Journal of Banking \& Finance},
  volume={31},
  number={8},
  pages={2475--2492},
  year={2007},
  publisher={Elsevier}
}

@article{yu2007correlated,
  title={Correlated defaults in intensity-based models},
  author={Yu, Fan},
  journal={Mathematical Finance},
  volume={17},
  number={2},
  pages={155--173},
  year={2007},
  publisher={Wiley-Blackwell}
}

@article{jarrow2001counterparty,
  title={Counterparty risk and the pricing of defaultable securities},
  author={Jarrow, Robert A and Yu, Fan},
  journal={The Journal of Finance},
  volume={56},
  number={5},
  pages={1765--1799},
  year={2001},
  publisher={Wiley Online Library}
}

@article{rosch2008estimating,
  title={Estimating credit contagion in a standard factor model},
  author={R{\"o}sch, Daniel and Winterfeldt, Birker},
  journal={Risk},
  pages={78--82},
  year={2008},
  publisher={Risk Waters Group}
}

@article{neu2004credit,
  title={Credit risk enhancement in a network of interdependent firms},
  author={Neu, Peter and K{\"u}hn, Reimer},
  journal={Physica A: Statistical Mechanics and its Applications},
  volume={342},
  number={3},
  pages={639--655},
  year={2004},
  publisher={Elsevier}
}

@article{sakata2007infectious,
  title={Infectious Default Model with Recovery and Continuous Limits},
  author={Sakata, Ayaka and Hisakado, Masato and Mori, Shintaro},
  journal={Journal of the Physical Society of Japan},
  volume={76},
  number={5},
  pages={054801},
  year={2007}
}

@article{cousin2013extension,
  title={An extension of {D}avis and {L}o's contagion model},
  author={Cousin, Areski and Dorobantu, Diana and Rulli{\`e}re, Didier},
  journal={Quantitative Finance},
  volume={13},
  number={3},
  pages={407--420},
  year={2013},
  publisher={Taylor \& Francis}
}

@article{schonbucher2001copula,
  title={Copula-dependent default risk in intensity models},
  author={Sch{\"o}nbucher, Philipp J and Schubert, Dirk},
  journal={Working Paper},
  year={2001}
}

@article{giesecke2004cyclical,
  title={Cyclical correlations, credit contagion, and portfolio losses},
  author={Giesecke, Kay and Weber, Stefan},
  journal={Journal of Banking \& Finance},
  volume={28},
  number={12},
  pages={3009--3036},
  year={2004},
  publisher={Elsevier}
}

@article{frey2010dynamic,
  title={Dynamic hedging of synthetic {CDO} tranches with spread risk and default contagion},
  author={Frey, R{\"u}diger and Backhaus, Jochen},
  journal={Journal of Economic Dynamics and Control},
  volume={34},
  number={4},
  pages={710--724},
  year={2010},
  publisher={Elsevier}
}

@article{li2000default,
  title={On Default Correlation: A Copula Function Approach},
  author={Li, David X},
  journal={Journal of Fixed Income},
  year={2000}
}

@article{burtschell2009comparative,
  title={A comparative analysis of {CDO} pricing models},
  author={Burtschell, Xavier and Gregory, Jon and Laurent, Jean-Paul},
  journal={The Journal of Derivatives},
  volume={16},
  number={4},
  pages={9--37},
  year={2009}
}

@article{das2007common,
	title={Common failings: {H}ow corporate defaults are correlated},
	author={Das, Sanjiv R and Duffie, Darrell and Kapadia, Nikunj and Saita, Leandro},
	journal={The Journal of Finance},
	volume={62},
	number={1},
	pages={93--117},
	year={2007},
	publisher={Wiley Online Library}
}

@article{duffie2009frailty,
	title={Frailty correlated default},
	author={Duffie, Darrell and Eckner, Andreas and Horel, Guillaume and Saita, Leandro},
	journal={The Journal of Finance},
	volume={64},
	number={5},
	pages={2089--2123},
	year={2009},
	publisher={Wiley Online Library}
}

@article{frey2003dependent,
	title={Dependent defaults in models of portfolio credit risk},
	author={Frey, R{\"u}diger and McNeil, Alexander J},
	journal={Journal of Risk},
	volume={6},
	pages={59--92},
	year={2003},
	publisher={Citeseer}
}

@article{frey2001copulas,
	title={Copulas and credit models},
	author={Frey, R{\"u}diger and McNeil, Alexander J and Nyfeler, Mark},
	journal={Risk},
	volume={10},
	number={111114.10},
	year={2001}
}

@article{torri2018robust,
	title={Robust and sparse banking network estimation},
	author={Torri, Gabriele and Giacometti, Rosella and Paterlini, Sandra},
	journal={European Journal of Operational Research},
	volume={270},
	number={1},
	pages={51--65},
	year={2018},
	publisher={Elsevier}
}

@article{jouanin2001modelling,
	title={Modelling dependence for credit derivatives with copulas},
	author={Jouanin, Jean-Fr{\'e}d{\'e}ric and Rapuch, Gr{\'e}gory and Riboulet, Ga{\"e}l and Roncalli, Thierry},
	journal={Available at SSRN 1032561},
	
	year={2001}
}

@article{kalemanova2007normal,
	title={The normal inverse Gaussian distribution for synthetic {CDO} pricing},
	author={Kalemanova, Anna and Schmid, Bernd and Werner, Ralf and others},
	journal={Journal of derivatives},
	volume={14},
	number={3},
	pages={80},
	year={2007},
	publisher={Citeseer}
}

@article{montagna2020origin,
	title={On the origin of systemic risk},
	author={Montagna, Mattia and Torri, Gabriele and Covi, Giovanni},
	journal={ECB Working Paper Series (no. 2502, December 2020)},
	year={2020}
}

@article{diebold2012better,
	title={Better to give than to receive: Predictive directional measurement of volatility spillovers},
	author={Diebold, Francis X and Yilmaz, Kamil},
	journal={International Journal of forecasting},
	volume={28},
	number={1},
	pages={57--66},
	year={2012},
	publisher={Elsevier}
}

@article{wojtowicz2014cdos,
	title={{CDO}s and the financial crisis: Credit ratings and fair premia},
	author={Wojtowicz, Marcin},
	journal={Journal of Banking \& Finance},
	volume={39},
	pages={1--13},
	year={2014},
	publisher={Elsevier}
}

@article{scheicher2008has,
	title={How has {CDO} market pricing changed during the turmoil? Evidence from CDS index tranches},
	author={Scheicher, Martin},
	year={2008},
	journal={ECB Working Paper Series (no. 910, June 2008)}
}

@article{ECB2006information,
	title={The information content of {CDS} index tranches for financial stability analysis},
	author={ECB},
	year={2006},
	journal={ECB Financial Stability Review, December 2006}
}

@article{gourieroux2021disastrous,
	title={Disastrous defaults},
	author={Gourieroux, Christian and Monfort, Alain and Mouabbi, Sarah and Renne, Jean-Paul},
	journal={Review of Finance},
	volume={25},
	number={6},
	pages={1727--1772},
	year={2021},
	publisher={Oxford University Press}
}

@article{schwarcz2008systemic,
	title={Systemic risk},
	author={Schwarcz, Steven L},
	journal={Geo. Lj},
	volume={97},
	pages={193},
	year={2008},
	publisher={HeinOnline}
}

@book{freixas2015systemic,
	title={Systemic risk, crises, and macroprudential regulation},
	author={Freixas, Xavier and Laeven, Luc and Peydr{\'o}, Jos{\'e}-Luis},
	year={2015},
	publisher={Mit Press}
}

@book{suisse1997creditrisk+,
  title={CreditRisk+: A credit risk management framework},
  author={Suisse, Credit},
  year={1997},
  publisher={Credit Suisse}
}

@article{kealhofer2001portfolio,
  title={Portfolio management of credit risk},
  author={Kealhofer, Stephen and Bohn, J},
  journal={Technical report},
  year={2001},
  publisher={KMV working paper}
}

@article{morgan1997creditmetrics,
  title={Creditmetrics-technical document},
  author={Morgan, JP and others},
  journal={JP Morgan, New York},
  volume={1},
  pages={102--127},
  year={1997},
  publisher={Citeseer}
}

@article{azizpour2018exploring,
	title={Exploring the sources of default clustering},
	author={Azizpour, Shahriar and Giesecke, Kay and Schwenkler, Gustavo},
	journal={Journal of Financial Economics},
	volume={129},
	number={1},
	pages={154--183},
	year={2018},
	publisher={Elsevier}
}

@article{bernardi2019switching,
	title={Switching generalized autoregressive score copula models with application to systemic risk},
	author={Bernardi, Mauro and Catania, Leopoldo},
	journal={Journal of Applied Econometrics},
	volume={34},
	number={1},
	pages={43--65},
	year={2019},
	publisher={Wiley Online Library}
}

@article{abdymomunov2013regime,
	title={Regime-switching measure of systemic financial stress},
	author={Abdymomunov, Azamat},
	journal={Annals of Finance},
	volume={9},
	number={3},
	pages={455--470},
	year={2013},
	publisher={Springer}
}

@article{davies2004credit,
	title={Credit spread modeling with regime-switching techniques},
	author={Davies, Andrew},
	journal={The Journal of Fixed Income},
	volume={14},
	number={3},
	pages={36},
	year={2004},
	publisher={Pageant Media}
}

@article{billio2012econometric,
	title={Econometric measures of connectedness and systemic risk in the finance and insurance sectors},
	author={Billio, Monica and Getmansky, Mila and Lo, Andrew W and Pelizzon, Loriana},
	journal={Journal of financial economics},
	volume={104},
	number={3},
	pages={535--559},
	year={2012},
	publisher={Elsevier}
}

@article{berndt2010correlation,
	title={On correlation and default clustering in credit markets},
	author={Berndt, Antje and Ritchken, Peter and Sun, Zhiqiang},
	journal={The Review of Financial Studies},
	volume={23},
	number={7},
	pages={2680--2729},
	year={2010},
	publisher={Oxford University Press}
}

@article{liang2012reduced,
	title={On a reduced form credit risk model with common shock and regime switching},
	author={Liang, Xue and Wang, Guojing},
	journal={Insurance: Mathematics and Economics},
	volume={51},
	number={3},
	pages={567--575},
	year={2012},
	publisher={Elsevier}
}

@article{davidson2020interdependence,
	title={Interdependence or contagion: A model switching approach with a focus on Latin America},
	author={Davidson, Sharada Nia},
	journal={Economic Modelling},
	volume={85},
	pages={166--197},
	year={2020},
	publisher={Elsevier}
}

@article{vasicek2002distribution,
	title={The distribution of loan portfolio value},
	author={Vasicek, Oldrich},
	journal={risk},
	volume={15},
	number={12},
	pages={160--162},
	year={2002}
}

@incollection{albrecher2007generic,
	title={A generic one-factor L{\'e}vy model for pricing synthetic CDOs},
	author={Albrecher, Hansj{\"o}rg and Ladoucette, Sophie A and Schoutens, Wim},
	booktitle={Advances in mathematical finance},
	pages={259--277},
	year={2007},
	publisher={Springer}
}

@article{bujok2012numerical,
	title={Numerical valuation of basket credit derivatives in structural jump-diffusion models},
	author={Bujok, Karolina and Reisinger, Christoph},
	journal={Journal of Computational Finance},
	volume={15},
	number={4},
	pages={115},
	year={2012}
}

\newpage

\appendix

\section{Theoretical results}\label{A}

Some useful results: at first, we will explore single name default probability under the model assumptions. We will then move to calculate the portfolio loss distribution. A few necessary tools and additional notational shortcuts will be introduced along the way.\newline

\begin{proposition}\label{pr: IA}
	Let $A\subseteq\{1\cdots,n\}$; the probability that at least one name in $A$ spreads an infection is given by the quantity $I_A$ defined as 
	\begin{equation} I_A := \left\{1-\prod_{j\in A} \left[1-p_j(t)\cdot v_j\right]\right\}. \end{equation}
\end{proposition}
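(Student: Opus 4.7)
The plan is to translate the phrase ``name $j$ spreads an infection'' into an event expressed in terms of the primitive Bernoulli variables, and then exploit mutual independence to factor the complementary probability.

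First I would observe that, according to the model description in Equation \eqref{eq:MainModDesc} and the interpretation given just after it, name $j$ is infective (i.e.\ attempts to propagate contagion to the rest of the system) precisely when $j$ defaults idiosyncratically \emph{and} activates its infection attempt. Formally, letting $E_j := \{X_j = 1\} \cap \{V_j = 1\}$, mutual independence of $X_j$ and $V_j$ gives
\begin{equation*}
P(E_j) \;=\; P\{X_j = 1\}\cdot P\{V_j = 1\} \;=\; p_j \, v_j,
\end{equation*}
and consequently $P(E_j^c) = 1 - p_j v_j$.

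Next I would write the event ``at least one name in $A$ spreads an infection'' as $\bigcup_{j \in A} E_j$, and pass to the complement $\bigcap_{j \in A} E_j^c$. The key step is that the collection $\{E_j\}_{j \in A}$ is a family of mutually independent events: each $E_j$ is a function of the pair $(X_j, V_j)$ only, and the full collection $\{X_i, V_i, U_i\}_{i=1}^n$ is by hypothesis mutually independent, so the pairs $(X_j, V_j)$ indexed by distinct $j$ are independent. Hence
\begin{equation*}
P\!\left(\bigcap_{j \in A} E_j^c\right) \;=\; \prod_{j \in A} P(E_j^c) \;=\; \prod_{j \in A}\bigl[1 - p_j v_j\bigr],
\end{equation*}
and the claim follows by taking $1$ minus this quantity, yielding exactly $I_A$.

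There is no real obstacle here: the proof is a one-line consequence of mutual independence once the defining event is correctly identified. The only point worth being explicit about is the justification that independence of the underlying $3n$ Bernoulli variables transfers to independence of the events $E_j$, which is immediate since each $E_j$ depends on a disjoint pair of them.
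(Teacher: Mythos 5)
Your proof is correct and follows essentially the same route as the paper: identifying the infection event for name $j$ with $\{X_j V_j = 1\}$, passing to the complement, and factoring $P\{X_j V_j = 0,\ \forall j \in A\} = \prod_{j\in A}[1 - p_j v_j]$ via mutual independence. Your explicit remark that independence of the events $E_j$ follows because each depends on a disjoint pair $(X_j, V_j)$ of the underlying variables is a minor point of added rigor over the paper's version, but the argument is identical in substance.
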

\begin{proof}
	[Proof]
	The probability that an infection starts from inside $A$ is given by
	\begin{equation} P\{X_j \cdot V_j=1 \text{ for at least one } j\in A\} = 1- P\{X_j\cdot V_j=0, \forall j \in A\}.\end{equation}
	We can now use the independence assumption on $X$ and $V$ to get
	\begin{equation}
		\begin{array}{lr}
			P\{X_j\cdot V_j=0, \forall j \in A\} & = \\
			\prod_{j \in A} P\{ X_j\cdot V_j =0\} & = \\
			\prod_{j \in A} [1-P\{ X_j\cdot V_j =1\} ] & = \\
			\prod_{j \in A} [1-P\{ X_j=1, V_j =1\} ] & = \\
			\prod_{j \in A} [1-P\{ X_j=1\}\cdot P\{V_j =1\} ] & = \\
			\prod_{j \in A} [1-p_j \cdot v_j ]. & \
		\end{array}
	\end{equation}
	
\end{proof}
We will now focus on single name properties. Let $\overline{A}$ be the complement of set $A$. The following result proves Proposition \ref{pr: I}, giving the formula for the probability of default of single names:

\begin{proposition*}[1]\label{pr: I_A}
	Let $X_i,V_i,U_i$, $i=1,...,n$ be mutually independent Bernoulli variables with probabilities  $p_i= P\{X_i=1\}, \,\, u_i=P\{U_i=1\} , \,\, v_i=P\{V_i=1\}$ on a time horizon $[0,t]$. Let $Z_i$ be defined according to Eq. \ref{eq:MainModDesc} and let $\tilde{p}_i := P\{Z_i=1\}$. We have:
	\begin{equation}\label{eq: PD_A} \tilde{p}_i = p_i+[1-p_i]\cdot [1-u_i]\cdot I_{\overline{\{i\}}},\end{equation}
	where 
	\begin{equation} I_{\overline{\{i\}}} := \left\{1-\prod_{j\neq i } \left[1-p_j\cdot v_j\right]\right\}. \end{equation}
\end{proposition*}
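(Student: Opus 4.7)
The plan is to compute $\tilde p_i = P\{Z_i = 1\}$ by taking the expectation of $Z_i$, after first verifying that $Z_i$ is indeed $\{0,1\}$-valued so that $P\{Z_i = 1\} = \mathbb{E}[Z_i]$. This reduces the problem to an elementary expectation calculation that exploits the full mutual independence of the $3n$ Bernoulli building blocks, combined with the probability-of-at-least-one-infection formula already established in Proposition \ref{pr: IA}.

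First I would check that $Z_i \in \{0,1\}$ by case analysis on $X_i$. If $X_i = 1$, the second summand in Eq.\ \eqref{eq:MainModDesc} vanishes through the factor $(1-X_i)$, leaving $Z_i = 1$. If $X_i = 0$, the first summand is zero and $Z_i$ equals the product $(1-U_i)\cdot\{1 - \prod_{j\neq i}(1 - X_j V_j)\}$, which is a product of $\{0,1\}$-valued quantities and hence itself in $\{0,1\}$. So $\tilde p_i = \mathbb{E}[Z_i]$.

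Next I would take expectations termwise in \eqref{eq:MainModDesc}. The first term gives $\mathbb{E}[X_i] = p_i$. For the second term, the key observation is that the three factors depend on disjoint groups of the underlying independent variables: $(1-X_i)$ depends only on $X_i$, $(1-U_i)$ only on $U_i$, and $1 - \prod_{j\neq i}(1 - X_j V_j)$ only on $\{X_j,V_j : j\neq i\}$. By the mutual independence assumption the expectation factors as
\begin{equation*}
\mathbb{E}[1-X_i]\cdot \mathbb{E}[1-U_i]\cdot \mathbb{E}\!\left[1 - \prod_{j\neq i}(1 - X_j V_j)\right] = (1-p_i)(1-u_i)\cdot I_{\overline{\{i\}}},
\end{equation*}
where the last equality is exactly Proposition \ref{pr: IA} applied to $A = \overline{\{i\}}$. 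Summing the two contributions yields \eqref{eq: PD_A}.

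There is no real obstacle here; the argument is essentially bookkeeping. The only point that requires any care is the independence argument that licenses the factorization, and this is guaranteed by the stated mutual independence of $X_i,U_i$ and the $(X_j,V_j)$ for $j\neq i$. Everything else is a direct substitution.
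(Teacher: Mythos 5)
Your proof is correct and takes essentially the same route as the paper's: the paper conditions on $X_i$ via the law of total probability and then factors $P\{(1-U_i)=1,\; 1-\prod_{j\neq i}(1-X_j V_j)=1\}$ using mutual independence before invoking Proposition \ref{pr: IA} with $A=\overline{\{i\}}$, which is precisely your termwise-expectation decomposition written in conditional-probability form. The only cosmetic difference is your explicit preliminary check that $Z_i\in\{0,1\}$ (so that $\tilde p_i=\mathbb{E}[Z_i]$), a step the paper uses implicitly when it treats the binary product inside the probability.
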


\begin{proof}
	We have that
	\begin{equation}
		\begin{array}{cl} 
			P\{Z_i=1\} = & P\{Z_i=1 | X_i=1\} \cdot P\{X_i=1\} + \\
			\quad & P\{Z_i=1| X_i=0\} \cdot P\{X_i=0\}. \\
		\end{array}
	\end{equation}
	It is easy to see that
	\begin{align*}
		\begin{array}{c} 
			P\{Z_i=1| X_i=1\}=1, \\
			P\{X_i=1\}=p_i, \\
			P\{X_i=0\}=1-p_i,
		\end{array} 
	\end{align*}
	so that the only part left to calculate is $P\{Z_i=1| X_i=0\}$; using the expression for $Z_i$ in \eqref{eq:MainModDesc} we have
	\begin{equation}P\{Z_i=1| X_i=0\} = P\left\{(1-U_i) \cdot \left[ 1 - \prod_{i\neq j} (1-X_j \cdot V_j ) \right] = 1\right\}. \end{equation}
	Both variables in the last expression can only take binary values $(0,1)$ so their product can only be $1$ if they both take value $1$. This implies that 
	\begin{equation}\begin{array}{l} 
			P\left\{(1-U_i) \cdot \left[ 1 - \prod_{i\neq j} (1-X_j \cdot V_j) \right] = 1\right\} \\
			= P\left\{(1-U_i)=1, \left[ 1 - \prod_{i\neq j} (1-X_j \cdot V_j) \right] = 1\right\}.
	\end{array} \end{equation}
	We can use the independence assumption between the various building blocks to split the right hand side as 
	\begin{equation}P\left\{(1-U_i)=1\right\} \cdot P\left\{\left[ 1 - \prod_{i\neq j} (1-X_j \cdot V_j) \right] = 1\right\}, \end{equation}
	and hence, thanks to proposition \ref{pr: IA} we have:
	\begin{equation}P\{Z_i=1| X_i=0\} = (1-u_i) \cdot I_{\overline{\{i\}}},\end{equation}
	that concludes the proof.\end{proof}

Intuitively, name $i$ can default in two ways: idiosyncratically (with probability $p_i$) or by contagion if it survives ($1-p_i$), fails to defend itself ($1-u_i$) and an external infection is active ($I_{\overline{i}}$). Let us move now to the proof of proposition \ref{portfolio}:

\begin{proposition*}[2]
	Consider a portfolio comprising $n$ assets, with each asset denoted by the index $j = 1,\dots,n$. The following recursive relationship links $[\alpha_j(\cdot,\cdot), \beta_j(\cdot,\cdot)]$ to $[\alpha_{j-1}(\cdot,\cdot),\beta_{j-1}(\cdot,\cdot)]$:
	\begin{equation}\label{alfabetaA}
		\begin{array}{cclr} 
			\ 								&\  & (1-p_j(t)) \cdot u_j \cdot \alpha_{j-1}(h,k,t) 			& + \\
			\alpha_j(h,k,t) & = & (1-p_j) \cdot (1-u_j(t)) \cdot \alpha_{j-1}(h,k-d_j) & + \\ 
			\ 								&\  & p_j(t) \cdot (1-v_j) \cdot \alpha_{j-1}(h-d_j,k),			& \  \\
			\ 								&\  & \ 											& \  \\
			\ 								&\  & (1-p_j(t)) \cdot u_j(t) \cdot \beta_{j-1}(h,k,t) + p_j(t) \cdot \beta_{j-1}(h-d_j,k) & + \\
			\beta_j(h,k,t) 		& = & (1-p_j(t)) \cdot (1-u_j) \cdot \beta_{j-1}(h,k-d_j) & + \\ 
			\ 								&\  & p_j(t) \cdot v_j \cdot \alpha_{j-1}(h-d_j, k), &\ \\
		\end{array}
	\end{equation}
	with the following boundary conditions:
	\begin{equation}\label{boundarycondA}
		\begin{array}{l}
			\alpha_{0,t}(0,0) = 1, \\ \alpha_{0,t}(i,j) = 0 \quad \quad \forall (i,j) \neq (0,0), \\ \beta_{0,t}(i,j) =0 \quad \quad \forall i,j.
		\end{array}
	\end{equation}

	Moreover, the order chosen to include the terms does not affect the final result.
\end{proposition*}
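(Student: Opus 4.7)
The plan is to prove the recursion by conditioning on the triple $(X_j, U_j, V_j)$ of Bernoulli variables attached to the newly introduced asset, exploiting the fact that these are mutually independent of the $3(j-1)$ building blocks that generate the state of the first $j-1$ assets. Under this conditioning, the contribution of asset $j$ to each of the three quantities $L^I$, $L^C$, $L^P$, as well as its effect on the contamination indicator $\mathbb{I}_C$, is deterministic, so the joint probabilities $\alpha_j(h,k)$ and $\beta_j(h,k)$ decompose into a finite sum of products of the form $P(X_j, U_j, V_j) \cdot \alpha_{j-1}(\cdot,\cdot)$ or $\cdot\beta_{j-1}(\cdot,\cdot)$.

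First I would handle $\alpha_j$. For the new system of $j$ assets to be uncontaminated, the subsystem of the first $j-1$ assets must already be uncontaminated (contributing an $\alpha_{j-1}$ factor) and asset $j$ must not launch an infection, i.e.\ $X_j V_j = 0$. I would then split the latter into the three disjoint cases (a) $X_j = 0, U_j = 1$ (asset $j$ contributes nothing, giving $\alpha_{j-1}(h,k)$ with weight $(1-p_j) u_j$), (b) $X_j = 0, U_j = 0$ (asset $j$ adds $d_j$ to $L^P$, giving $\alpha_{j-1}(h, k-d_j)$ with weight $(1-p_j)(1-u_j)$), and (c) $X_j = 1, V_j = 0$ (asset $j$ adds $d_j$ to $L^I$, giving $\alpha_{j-1}(h-d_j, k)$ with weight $p_j (1-v_j)$). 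Summing the three terms reproduces the first line of \eqref{alfabetaA}.

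Next I would handle $\beta_j$, which is more delicate because contamination in the new system can be inherited from the first $j-1$ assets \emph{or} newly triggered by asset $j$ itself. Starting from $\beta_{j-1}(h',k')$ (already contaminated) I would enumerate the four sub-cases of $(X_j, U_j, V_j)$ conditioned on $L^P = 0$: when $X_j = 0, U_j = 1$ asset $j$ is inert (weight $(1-p_j)u_j$, same $(h,k)$); when $X_j = 0, U_j = 0$ asset $j$ defaults by contagion because the world is already contaminated and $j$ lacks immunity, so $d_j$ flows into $L^C$ (weight $(1-p_j)(1-u_j)$, shift to $k - d_j$); and the two cases $X_j = 1, V_j \in \{0,1\}$ each contribute $d_j$ to $L^I$, collapsing into a single term with total weight $p_j(1-v_j) + p_j v_j = p_j$. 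The remaining source of contamination is the transition from $\alpha_{j-1}(h-d_j, k)$ via $X_j = 1, V_j = 1$: at this instant, the $k$ units previously booked as $L^P$ become $L^C$ losses by the very definition of $L^P$ (which is zeroed out when $\mathbb{I}_C$ flips to $1$), so this contributes $p_j v_j \, \alpha_{j-1}(h-d_j, k)$ to $\beta_j(h,k)$. This is the subtle bookkeeping step and I expect it to be the main conceptual obstacle, since it is the only place where one of the recursion terms moves mass between the two families.

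For the boundary conditions the statement is immediate: with $n = 0$ assets the sums defining $L^I, L^C, L^P$ are empty and the contamination indicator is $0$, hence $\alpha_0(0,0) = 1$ and all other evaluations vanish. Finally, order independence follows from the observation that $\alpha_n$ and $\beta_n$ are probabilities of joint events whose definitions are symmetric in the labels $1, \ldots, n$; since the recursion merely gives a procedural way of computing these symmetric quantities, any permutation of the labels produces the same final $\alpha_n, \beta_n$.
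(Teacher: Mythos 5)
Your derivation of the recursion itself is correct and matches the paper's proof essentially step for step: the paper likewise enumerates three transitions into $\alpha_j(h,k)$ (full survival, survival without immunization shifting $k$, non-infectious default shifting $h$) and four into $\beta_j(h,k)$, including the merging of $X_j=1,\,V_j\in\{0,1\}$ into a single weight $p_j$ and the ``first infection'' term $p_j v_j\,\alpha_{j-1}(h-d_j,k)$, where the $k$ units of potential losses are reinterpreted as contagion losses --- you correctly flagged this as the one step that moves mass between the two families, and your bookkeeping there is exactly the paper's. Where you genuinely diverge is the order-independence claim. The paper proves it algebraically: it inserts two names $i$ and then $j$, expands the composition of the two recursion steps into nine explicit terms, observes that every term is symmetric under swapping $i \leftrightarrow j$, and invokes induction to extend to the whole sequence. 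You instead argue that $\alpha_n$ and $\beta_n$ are probabilities of events $\{L_n^I=h,\,L_n^C=0,\,L_n^P=k,\,\contaminated=0\}$ (resp.\ the contaminated analogue) that are symmetric in the labels $1,\dots,n$, so any insertion order computing these probabilities must agree. This is valid and arguably cleaner, but note it carries a proof obligation the paper's algebraic check sidesteps: you must read your conditioning argument as an induction on $j$ establishing that the recursively defined $\alpha_j,\beta_j$ \emph{equal} the corresponding subsystem probabilities (with the subsystem of the first $j$ names treated as closed, which is exactly what the $L^P$ state variable makes coherent). Your conditioning setup --- independence of $(X_j,U_j,V_j)$ from the first $3(j-1)$ variables, plus the observation that $(L^I,L^C,L^P,\contaminated)$ is a sufficient statistic of the subsystem --- does supply this, so the argument goes through; it buys a conceptual proof in place of the paper's term-by-term verification, at the cost of making the probabilistic identification explicit at every step rather than only at $j=n$.
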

\begin{proof}
	In order to obtain a set of equations for $\alpha_j(\cdot,\cdot)$, consider that there are 3 ways of reaching $\alpha_j(h,k)$ starting from $\alpha_{j-1}(h,k)$ and adding a new name:
	\begin{enumerate}
		\item \textbf{Full survival}
		\begin{equation}(1-p_j) \cdot u_j \cdot \alpha_{j-1}(h,k). \end{equation}
		The name survives with probability $(1-p_j)$ and protects itself from future aggressions $(u_j)$. No losses are realized neither potential ones added.
		\item \textbf{Partial survival}
		\begin{equation}(1-p_j) \cdot (1-u_j) \cdot \alpha_{j-1}(h,k-d_j). \end{equation}
		The name survives with probability $(1-p_j)$ but fails to protect itself against future aggressions $(1-u_j)$. Its $d_j$ units of losses are at risk should an infection spread.
		\item \textbf{Non-infectious default}
		\begin{equation}p_j \cdot (1-v_j) \cdot \alpha_{j-1}(h-d_j,k). \end{equation}
		The name defaults directly $(p_j)$ but it is not trying to start an infection $(1-v_j)$.
	\end{enumerate}
	$$\quad$$
	Similarly, there are 4 ways of reaching $\beta_j(h,k)$:
	\begin{enumerate}
		\item \textbf{Full survival}
		\begin{equation} (1-p_j) \cdot u_j \cdot \beta_{j-1}(h,k).\end{equation}
		The name survives $(1-p_j)$ and protects itself against the current and future infections $(u_j)$.
		\item \textbf{Default by contagion}
		\begin{equation} (1-p_j) \cdot (1-u_j) \cdot \beta_{j-1}(h,k-d_j).\end{equation}
		The name survives $(1-p_j)$ but fails to protect itself against the existing infection $(1-u_j)$.
		\item \textbf{Direct default}
		\begin{equation}p_j \cdot \beta_{j-1}(h-d_j,k).\end{equation}
		The name defaults $(p_j)$ and in this case we don't need to consider separately the cases in which it spreads or not the infection as we are already in an infected world.
		\item \textbf{First infection}
		\begin{equation}p_j(t) \cdot v_j \cdot \alpha_{j-1}(h-d_j, k).\end{equation}
		The name defaults $(p_j)$ and spreads the contagion $(v_j)$ in a previously uncontaminated world causing the $k$ units of potential losses to become real ones.
	\end{enumerate}
	Putting together the previous equations, we get system \eqref{alfabetaA}.
	
	Let's now prove that the order with which we add names is not important for the final result. We report the proof only for $\alpha$ as the case for $\beta$ is similar. Suppose that we want to add two names, $i$ first and then $j$, to a set of $m$ names. In order to shorten the notation, let us indicate with $\bar{p}=1-p$, $\bar{v}=1-v$ and $\bar{u}=1-u$. When we add name $j$, we would apply \eqref{alfabetaA} to a portfolio of $m+1$ names obtaining
	\begin{equation}\label{indproof1}
		\begin{array}{cclr} 
			\ 								&\  & \bar{p}_j \cdot u_j \cdot \alpha_{m+1}(h,k) 			& + \\
			\alpha_{m+2,t}(h,k,t) & = & \bar{p}_j \cdot \bar{u}_j \cdot \alpha_{m+1}(h,k-d_j) & + \\ 
			\ 								&\  & p_j \cdot \bar{v}_j \cdot \alpha_{m+1}(h-d_j,k).			& \  \\
		\end{array}
	\end{equation}
	Each of the three terms $\alpha_{m+1}$ on the right hand side can be explicitly written by applying \eqref{alfabetaA} again:
	\begin{equation}\label{indproof2A}
		\begin{array}{cclr} 
			\ 								&\  & \bar{p}_i \cdot u_i \cdot \alpha_{m}(h,k) 			& + \\
			\alpha_{m+1}(h,k) & = & \bar{p}_i \cdot \bar{u}_i \cdot \alpha_{m}(h,k-d_i) & + \\ 
			\ 								&\  & p_i \cdot \bar{v}_i \cdot \alpha_{m}(h-d_i,k),			& \  \\
		\end{array}
	\end{equation}
	\begin{equation}\label{indproof2B}
		\begin{array}{cclr} 
			\ 								    &\  & \bar{p}_i \cdot u_i \cdot \alpha_{m}(h,k-d_j) 			& + \\
			\alpha_{m+1}(h,k-d_j) & = & \bar{p}_i \cdot \bar{u}_i \cdot \alpha_{m}(h,k-d_j-d_i) & + \\ 
			\ 								    &\  & p_i \cdot \bar{v}_i \cdot \alpha_{m}(h-d_i,k-d_j),			& \  \\
		\end{array}
	\end{equation}
	\begin{equation}\label{indproof2C}
		\begin{array}{cclr} 
			\ 								&\  & \bar{p}_i \cdot u_i \cdot \alpha_{m}(h-d_j,k) 			& + \\
			\alpha_{m+1}(h-d_j,k) & = & \bar{p}_i \cdot \bar{u}_i \cdot \alpha_{m}(h-d_j,k-d_i) & + \\ 
			\ 								&\  & p_i \cdot \bar{v}_i \cdot \alpha_{m}(h-d_j-d_i,k)	.		& \  \\
		\end{array}
	\end{equation}
	Substituting \eqref{indproof2A}, \eqref{indproof2B} and \eqref{indproof2C} into \eqref{indproof1} and rearranging terms, we can write
	\begin{align*}
		\begin{array}{clr} 
			\alpha_{m+2}(h,k)= & \bar{p}_i \cdot u_i \cdot \bar{p_j} \cdot u_j \cdot \alpha_{m}(h,k) & + \\
			\ 								& \bar{p}_i \cdot \bar{u}_i \cdot \bar{p}_j \cdot \bar{u}_j \cdot \alpha_{m}(h,k-d_i-d_j) & + \\ 
			\ 								& p_i \cdot \bar{v}_i \cdot p_j \cdot \bar{v}_j \cdot \alpha_{m}(h-d_i-d_j,k)			& +\\
			\ 								& \bar{p}_i \cdot u_i \cdot \bar{p}_j \cdot \bar{u}_j \cdot \alpha_{m}(h,k-d_j)& + 	\\
			\
			&\bar{p_j} \cdot u_j \cdot \bar{p}_i \cdot \bar{u}_i \cdot \alpha_{m}(h,k-d_i)& + \\
			\ 								& \bar{p}_i \cdot u_i \cdot p_j \cdot \bar{v_j} \cdot \alpha_{m}(h-d_j,k)& + \\
			\ 
			&\bar{p}_j \cdot u_j \cdot p_i \cdot \bar{v_i} \cdot \alpha_{m}(h-d_i,k)& + \\
			\ 								& \bar{p}_i \cdot \bar{u}_i \cdot p_j \cdot \bar{v_j} \cdot \alpha_{m}(h-d_j,k-d_i)& + \\
			\ 
			&\bar{p}_j \cdot \bar{u}_j \cdot p_i \cdot \bar{v}_i \cdot \alpha_{m}(h-d_i,k-d_j).& \ \\
		\end{array}
	\end{align*}
	Every line in the last equation is symmetric with respect to $i$ and $j$ and then we can invert the order between $i$ and $j$ without changing the final result.\footnote{Technically, this only proves that the order of the last 2 names added is not important. It is quite easy to extend the reasoning to the entire sequence using induction on $m$.}

\end{proof}


\begin{proposition*}[3]\label{pr:NoLosses_A}
	The probability of observing no losses is given by the following result:
	\begin{equation} \label{eq:NoLosses_A}
		P\{ L_n (t)= 0\} = \prod_{j=1}^{n} (1-p_j(t)).
	\end{equation}

\end{proposition*}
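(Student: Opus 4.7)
The plan is to work directly from the defining equation \eqref{eq:MainModDesc} rather than from the recursive relation \eqref{alfabetaA}, since the event $\{L_n(t)=0\}$ coincides with the joint event $\{Z_i=0 \text{ for all } i=1,\dots,n\}$, which turns out to depend only on the idiosyncratic building blocks $X_1,\dots,X_n$.

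First I would establish the following key equivalence: $L_n(t)=0$ if and only if $X_i(t)=0$ for every $i$. The forward implication is the nontrivial direction. Suppose $L_n(t)=0$, so every $Z_i=0$. If some $X_{i_0}=1$, then by \eqref{eq:MainModDesc} we would immediately get $Z_{i_0}=1+(1-1)\cdot(\cdots)=1$, a contradiction; hence all $X_i=0$. The reverse implication is equally direct: substituting $X_j=0$ for all $j$ into \eqref{eq:MainModDesc} gives
\begin{equation*}
Z_i = 0 + (1-0)\,(1-U_i)\,\Bigl\{1-\prod_{j\neq i}(1-0\cdot V_j)\Bigr\} = (1-U_i)\cdot 0 = 0
\end{equation*}
for every $i$, so $L_n(t)=\sum_i d_i Z_i=0$. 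Thus the two events coincide.

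Second, I would invoke the mutual independence of the Bernoulli variables $X_1,\dots,X_n$ (assumed in the construction at the start of Section~\ref{sec:contagion}) to factorize
\begin{equation*}
P\{L_n(t)=0\} = P\{X_1=0,\dots,X_n=0\} = \prod_{j=1}^{n}\bigl(1-p_j(t)\bigr),
\end{equation*}
which is exactly \eqref{eq:NoLosses_A}.

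I do not anticipate any serious obstacle; the only subtle point is making sure that in the forward direction one argues that a single idiosyncratic default in isolation already forces a nonzero loss (independent of whether contagion subsequently activates or not), which is immediate from the additive form of $Z_i$ in \eqref{eq:MainModDesc}. As a sanity check, the same identity can be recovered from the recursion \eqref{alfabetaA}: one observes that $\beta_n(0,0)=0$ (an active infection requires some $X_j=1$, contributing at least $d_j$ to $L_n^I$), and summing the $\alpha$-recursion at $h=0$ over $k$ collapses the $u_j$ and $1-u_j$ terms, giving $\sum_k \alpha_j(0,k)=(1-p_j)\sum_k \alpha_{j-1}(0,k)$, which by induction yields the same product. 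I would present the direct argument as the main proof for brevity.
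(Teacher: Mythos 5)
Your proof is correct, but it takes a genuinely different route from the paper's. The paper argues by induction on $n$ through the recursion machinery: the base case $n=2$ is verified by explicitly expanding the four terms $\alpha_{2}(0,0),\alpha_{2}(0,d_1),\alpha_{2}(0,d_2),\alpha_{2}(0,d_1+d_2)$ via \eqref{alfabetaA} and observing that the $u$-dependent factors sum to one, and the inductive step conditions on $\{L_{n-1}(t)=0\}$, using independence to extract the factor $(1-p_n(t))$. You instead prove the event identity $\{L_n(t)=0\}=\{X_i(t)=0\ \forall i\}$ directly from the structural equation \eqref{eq:MainModDesc} and then factorize by mutual independence of the $X_i$ — this is shorter, avoids the recursion entirely, and is in fact a formalization of exactly the intuition the paper states informally just before the proposition (``the only way we experience no losses is that every name survives idiosyncratically''). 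Both arguments share the same implicit assumption that each $d_i$ is a strictly positive integer, so that $Z_{i_0}=1$ forces $L_n(t)\ge d_{i_0}>0$; you flag this correctly as the one subtle point. What the paper's route buys is a consistency check that the recursion \eqref{alfabetaA} and its boundary conditions reproduce the correct atom at zero — which your sanity-check paragraph recovers anyway, and with a bonus: you make explicit the fact $\beta_n(0,k)=0$ (an active infection requires some $X_j=1$, hence positive idiosyncratic losses), which the paper uses only implicitly when it writes $P\{L_2(t)=0\}$ as a sum of $\alpha$ terms alone. Your direct argument would be a perfectly valid, arguably cleaner, substitute for the published proof.
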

\begin{proof}
	The assert can be easily proved by recursion on $n$. We start with the case $n=2$. A direct application of equation \eqref{eq:LossFromAlfaBeta} leads to 
	$$ P\{ L_2(t) = 0\} = \alpha_{2,t}(0,0)+\alpha_{2,t}(0,d_1)+\alpha_{2,t}(0,d_2)+\alpha_{2,t}(0,d_1+d_2).$$
	For each of the four terms on the right hand side we can apply recursively equation \ref{alfabetaA} (keeping in mind \ref{boundarycondA}) and obtain
	$$ \alpha_{2,t}(0,0) = (1-p_1(t))\cdot(1-u_1(t)) \cdot (1-p_2(t))\cdot(1-u_2(t)), $$
	$$ \alpha_{2,t}(0,d_1) = (1-p_1(t))\cdot(u_1(t)) \cdot (1-p_2(t))\cdot(1-u_2(t)), $$
	$$ \alpha_{2,t}(0,d_2) = (1-p_1(t))\cdot(1-u_1(t)) \cdot (1-p_2(t))\cdot(u_2(t)), $$
	$$ \alpha_{2,t}(0,d_1+d_2(t)) = (1-p_1(t))\cdot(u_1(t)) \cdot (1-p_2(t))\cdot(u_2(t)). $$
	Summing the four equations above we get
	\begin{align*}
		\begin{array}{c} 
			P\{ L_2(t) = 0\}=(1-p_1(t))\cdot(1-p_2(t)) \times\\
			\quad \quad \times \left[ (1-u_1(t))(1-u_2(t))+u_1(t)(1-u_2(t))+(1-u_1(t))u_2+u_1(t) u_2(t)\right],
		\end{array}
	\end{align*}
	from which the assert as $$(1-u_1(t))(1-u_2(t))+u_1(t)(1-u_2(t))+(1-u_1(t))u_2(t)+u_1(t) u_2(t)=1.$$
	Assume now that \eqref{eq:NoLosses} holds for $n-1$ and then prove it for $n$. We have
	$$P\{ L_n(t) = 0\} = P\{ L_n(t) = 0 | L_{n-1}(t) = 0\}\cdot P\{ L_{n-1}(t) = 0\}.$$
	The above is true since $P\{ L_n(t) = 0 | L_{n-1}(t) > 0\}=0$. Thanks to the inductive hypothesis we have that $P\{ L_{n-1}(t) = 0\}=\prod_{j=1}^{n-1} (1-p_j(t))$; coming to $P\{ L_n(t) = 0 | L_{n-1}(t) = 0\}$, this represents the probability that a portfolio of $n$ names suffers no losses given the already $n-1$ names experience no defaults. It is then immediate to see that the only way this can be possible is that also the $n$th name does not default idiosyncratically. Exploiting the independence among the various components, we get
	$$P\{ L_n(t) = 0 | L_{n-1}(t) = 0\} = (1-p_n(t)).$$
\end{proof}

	The following result (Proposition \ref{prop:correl}) provides a formula for the pairwise default correlation of two names in an homogeneous setting. We underline that, following an analogous approach, the result can be extended to the heterogeneous case.
	
	\begin{proposition*}[4] \label{prop:correlA}
		Consider a portfolio with $n$ assets in which defaults are regulated by the infection model in Section \ref{sec:contagion} and such that $p_i = p,\; v_i = v,\; u_i = u, \tilde{p}_i = \tilde{p} \; \forall i$. The pairwise default correlation of $Z_i$ and $Z_j$ is:
		
		\begin{equation}
			\rho_{CON}(Z_1,Z_2) = \frac{P(Z_i = 1,Z_j=1) -\tilde{p}^2}{\tilde{p}(1-\tilde{p})},\notag
		\end{equation}
		where $\tilde{p}$ is the marginal default probability of each name (see Eq. \ref{eq: PD_A}), and the probability of joint defaults of asset with indices $i$ and $j$ is:
		\begin{equation}\label{eq:joint_def}
			P(Z_i = 1,Z_j=1) = p^2 + 2p(1-p)(1-u)\left[1-(1-v)(1-pv)^{n-2}\right] + (1-p)^2 (1-u)^2\left[1-(1-pv)^{n-2}\right].\notag
		\end{equation}	
	\end{proposition*}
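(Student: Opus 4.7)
The correlation formula itself is just the standard identity for Bernoulli variables, since $\mathbb{V}(Z_i) = \tilde{p}(1-\tilde{p})$, so the whole task reduces to computing the joint default probability $P(Z_i = 1, Z_j = 1)$. The natural plan is to condition on $(X_i, X_j)$, which splits the event into four disjoint and independent cases, and then use the mutual independence of $X_k, V_k, U_k$ to handle the contagion event in each case.

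First I would decompose
\begin{equation*}
P(Z_i = 1, Z_j = 1) = \sum_{a,b \in \{0,1\}} P(Z_i = 1, Z_j = 1 \mid X_i = a, X_j = b)\,P(X_i = a)P(X_j = b),
\end{equation*}
and treat each of the four cases separately using the definition \eqref{eq:MainModDesc}. The case $X_i = X_j = 1$ is immediate, contributing $p^2$ since both names default idiosyncratically regardless of anything else. For the mixed cases $X_i = 1, X_j = 0$ (and symmetrically $X_i = 0, X_j = 1$), we need $U_j = 0$ and at least one infection attempt directed at $j$; this infection attempt can either come from $i$ itself (requiring $V_i = 1$, which has probability $v$ conditional on $X_i = 1$) or from some other $k \notin \{i,j\}$ via the event $X_k V_k = 1$. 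By independence, the probability of no infection reaching $j$ is $(1-v)(1-pv)^{n-2}$, yielding a contribution of $p(1-p)(1-u)[1 - (1-v)(1-pv)^{n-2}]$ per symmetric case.

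The most delicate case is $X_i = X_j = 0$: we need both $U_i = U_j = 0$ and simultaneously an infection attempt reaching $i$ and one reaching $j$. The key observation to make the calculation clean is that, conditional on $X_i = X_j = 0$, neither $i$ nor $j$ can itself spread infection, so both ``at least one infection attempt from $k \neq i$'' and ``at least one infection attempt from $k \neq j$'' reduce to the single event ``at least one $k \in \{1,\dots,n\}\setminus\{i,j\}$ satisfies $X_k V_k = 1$''. Proposition \ref{pr: IA} applied to $A = \{1,\dots,n\}\setminus\{i,j\}$ then gives the probability $1 - (1-pv)^{n-2}$, so this case contributes $(1-p)^2(1-u)^2[1-(1-pv)^{n-2}]$. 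Summing the four contributions yields \eqref{eq:joint_def}.

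Finally, plugging into the standard Pearson correlation formula for Bernoulli variables with marginal probability $\tilde{p}$ (as given by Proposition \ref{pr: I}) immediately produces the stated $\rho_{CON}(Z_1,Z_2)$. The main conceptual obstacle is making sure that when conditioning on $X_i$ and $X_j$, one correctly distinguishes infection sources inside $\{i,j\}$ from those outside: when $X_i = 1$ the name $i$ is a potential infector of $j$ with probability $v$ rather than $pv$, and this asymmetry must be carried through the mixed cases. Once that bookkeeping is done, independence of the $(X_k, V_k, U_k)$ blocks makes all remaining steps mechanical.
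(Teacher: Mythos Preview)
Your proposal is correct and follows essentially the same approach as the paper: both condition on the four possible values of $(X_i,X_j)$, handle the mixed cases by noting that the idiosyncratically defaulted name infects with probability $v$ rather than $pv$, and observe in the $X_i=X_j=0$ case that the two ``some infector exists'' events collapse to a single event over the remaining $n-2$ names. The final assembly via the Bernoulli correlation identity is identical.
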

	
	\begin{proof}
		For two assets with indices $i,j$, the default probabilities are 
		\begin{equation} 
			Z_i = X_i + \left[1-X_i\right] \cdot \left[1-U_i\right] \cdot \left\{ 1 - \prod_{k\neq i} \left[1-X_k \cdot V_k\right] \right\},
		\end{equation}
		\begin{equation} 
			Z_j = X_j + \left[1-X_j\right] \cdot \left[1-U_j\right] \cdot \left\{ 1 - \prod_{k\neq j} \left[1-X_k \cdot V_k\right] \right\}.
		\end{equation}
		
		Then, we consider the following four disjoint cases. For each of the cases, we compute the probability of the case and the conditional joint default probability.

		\begin{enumerate}
			\item 	Both assets default idiosyncratically ($[X_i = 1, X_j = 1]$). We have $$P(X_i = 1, X_j = 1) = p^2.$$ The conditional probability of default is trivially equal to one since both names defaulted idiosyncratically:  $$P(Z_i = 1,Z_j=1 | X_i = 1, X_j = 1) = 1.$$
			\item Only the first asset defaults idiosyncratically ($[X_i = 1, X_j = 0]$). The probability of the case is: $$P(X_i = 1, X_j = 0) = p(1-p).$$ The conditional default probability is: $$P(Z_i = 1,Z_j=1 | X_i = 1, X_j = 0) = (1-u)\left[1-(1-v)(1-pv)^{n-2}\right].$$ Indeed, asset $j$ defaults only if it fails to obtain immunization ($P(U_j = 0) = (1-u)$), and at least one asset transmits contagion (note that in the conditional scenario we have $P(X_i = 1 | X_i = 1, X_j = 0) = 1$, hence the probability that asset $i$ does not generate contagion is $1-v$, and not $1-pv$).
			\item Only the second asset defaults idiosyncratically ($[X_i = 0, X_j = 1]$). The argument is symmetrical to the previous point due to homogeneity, thus $P(X_i = 0, X_j = 1) = P(X_i = 1, X_j = 0) $ and $ P(Z_i = 1,Z_j=1 | X_i = 0, X_j = 1) = P(Z_i = 1,Z_j=1 | X_i = 1, X_j = 0)$.
			\item None of the two assets default idiosyncratically ($[X_i = 0, X_j = 0]$). The probability of the case is $$P(X_i = 0, X_j = 0) = (1-p)^2,$$ and the conditional default probability is $$P(Z_i = 1,Z_j=1 | X_i = 0, X_j = 0) = (1-u)^2 \left[1-(1-pv)^{n-2}\right]$$ since the two assets default contagion only if they both fail to immunize, and if any of the other assets in the system defaults and transmits contagion.
		\end{enumerate}
		
		We then compute the unconditional joint probability as:
		
		\begin{align}
			P(Z_i = 1,Z_j=1) &= P(X_i = 1, X_j = 1) \cdot P(Z_i = 1,Z_j=1 | X_i = 1, X_j = 1) + \notag\\
			& + 2P(X_i = 1, X_j = 0) \cdot P(Z_i = 1,Z_j=1 | X_i = 1, X_j = 0) +\notag \\
			& + P(X_i = 1, X_j = 1) \cdot P(Z_i = 0,Z_j=0 | X_i = 0, X_j = 0)\notag \\
			& =  p^2 + 2p(1-p)(1-u)\left[1-(1-v)(1-pv)^{n-2}\right] + (1-p)^2(1-u)^2\left[1-(1-pv)^{n-2}\right]. \notag
		\end{align}
		
		To complete the proof, we remind that for the homogeneity of the assets the covariance and correlation of $Z_i, Z_j$ is
		
		\begin{equation}
			\mathbb{C}_{CON}(Z_i,Z_j) = P(Z_i=1,Z_j=1) - \tilde{p}^2, \notag
		\end{equation}
		\begin{equation}
			\rho_{CON}(Z_i,Z_j) = \frac{\mathbb{C}_{CON}(Z_i,Z_j)}{\mathbb{V}_{CON}(Z_i,Z_j)}  = \frac{\mathbb{C}_{CON}(Z_i,Z_j)}{\tilde{p}(1-\tilde{p})},\notag
		\end{equation}
		where $\tilde{p}$ is the marginal default probability of each asset.
		
	\end{proof}

\section{Monte Carlo simulation of the loss distribution}\label{sec:MC}

In case of very large portfolio, the construction of the loss distribution using Equation \ref{eq:LossFromAlfaBeta} and Algorithms \ref{algo:alpha} and \ref{algo:beta} can become computational unfeasible. In such cases the loss distribution can be approximated using Monte Carlo simulations. Indeed, the losses can be computed by simulating three independent Bernoulli variables $X_i$, $V_i$, $U_i$ for each of the assets with index $i = 1, \dots, n$ and computing $X_i$ using Equation \ref{eq:MainModDesc}.
Here we compare empirically the loss distribution obtained using Equation \ref{eq:LossFromAlfaBeta} and Monte Carlo simulations. We consider the following constant parameters: $\tilde{p}_i=0.05$, $\omega_i= 0.5$ for $i = 1, \dots, n$. Figure \ref{tab:MC_simulations} reports the run time, showing that the computational time of our algorithm grows substantially with the number of assets. Still, it remains competitive with Monte Carlo, even for $n=750$.

\begin{table}[!h]
	\centering
	\begin{tabular}{ c |c c }
		\hline
		& \multicolumn{2}{c}{Run time (seconds)} \\
		$n$ & Proposed algorithm & Monte Carlo \\ \hline
		50 & 0.004& 0.217\\
		100 & 0.02& 0.540\\
		125 & 0.047& 0.793\\
		150 & 0.071& 1.204\\
		200 & 0.168& 1.879\\
		500 & 2.89& 11.246\\
		750 & 8.54& 26.847\\ \hline
	\end{tabular}
	\caption{Computational time required for the estimation of the loss distribution with $n$ assets. For Monte Carlo we run 5000 scenarios. Results are the averages across 20 runs.}
	\label{tab:MC_simulations}
\end{table}

\begin{table}[!h]
	\centering
	\begin{tabular}{ c |c}
		\hline
		\multicolumn{2}{c}{Monte Carlo approximation} \\ \hline
		$n$ & KL divergence \\ \hline
		1000 & 0.0735\\
		2500 & 0.0165\\
		5000 & 0.0068\\
		10000 & 0.0041\\
		20000 & 0.0019\\
		50000 & 0.0007\\  \hline
		
	\end{tabular}
	\caption{KL divergence between the Monte Carlo estimates of the loss distribution computed with different numbers of scenarios and the true distribution obtained using Equation \ref{eq:LossFromAlfaBeta}. For Monte Carlo we run 5000 scenarios. Results are the averages across 20 runs.}
	\label{tab:MC_simulations2}
\end{table}

\begin{figure}[!h]
	\centering
	\includegraphics[width=0.49\textwidth]{./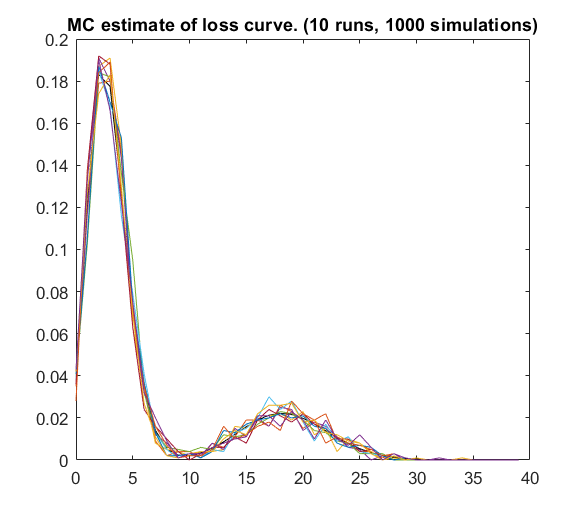}
	\includegraphics[width=0.49\textwidth]{./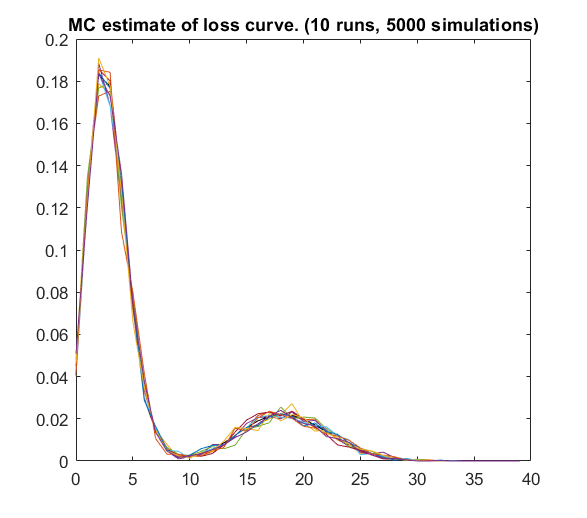}
	\includegraphics[width=0.49\textwidth]{./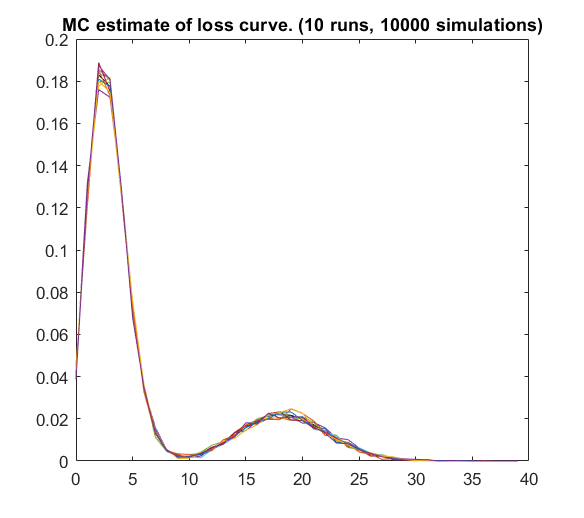}
	\caption{10 Monte Carlo estimates of the loss distribution based on 1000, 5000, and 10000 simulations.}
	\label{fig:MC_loss}
\end{figure}

We then assess the quality of the Monte Carlo approximation of the loss distribution for different number of simulations, fixing the size of the portfolio $n$. In particular, assuming constant LGD equal to 1, we measure the Kullback–Leibler (KL) divergence between the loss distribution estimated with Monte Carlo and the true one obtained with our approach. The KL divergence is defined as:
\begin{equation}
	D_{\mathrm{KL}}(P \,\|\, \hat{P}) = \sum_{h} P(h) \, \log \frac{P(h)}{\hat{P}(h)},
\end{equation}
where $P(h)$ and $\hat{P}(h)$ are the probability of having $h$ losses according to the true distribution, and to the Monte Carlo approximation. Table \ref{tab:MC_simulations2} reports the divergence for different Monte Carlo simulation size. We see that the Monte Carlo estimation converges to the true distribution as $n$ grows. Figure \ref{fig:MC_loss} shows visually 10 Monte Carlo estimates of the loss distribution computed with 1000, 5000, and 10000 simulations, respectively. We see that the estimates with 1000 and 5000 simulations are noisy, while the ones with 10000 simulations are closer to the true one.

\section{Calibration under alternative values of $\mu_i$}\label{sec:mus}

In the empirical analysis, as discussed in Section \ref{sec:choice_mu}, we considered three specifications for the parameters $\mu_i$: in one case they have been kept uniform across companies (FLAT), then we considered higher values for companies in the financial sector and in the banking sector (FIN and BNK, respectively). In this Appendix we test more specifications, repeating the calibration of the models using alternative values of $\mu^*_i$ to assess the sensitivity of the calibration to such parameters. In particular, we set $\mu^*_i =\eta \cdot \mu_i$, where $\mu_i$ is the value of the coefficients for either the FLAT, FIN, or BNK cases (see Section \ref{sec:choice_mu}), with $\eta \in \{ 0.5,0.75,1.25,1.5\}$, and recalibrate the model on the four dates presented in Table \ref{tab:Calibration} (30/03/2020, 20/06/2021, 30/09/2022, 31/03/2025).

Table \ref{tab:mus} shows the average MAE across four dates for different values of $\eta$. We see that the performance of the CON and COND models tend to be slightly better for lower values of $\eta$ compared to the baseline, showing lower estimation error. In contrast, the fit of the MIX models ---the ones with the best fit among the framework tested--- is rather stable across the studied values of $\eta$.

The results suggest that the introduction of $\eta$ as a free parameter in the calibration may be justifiable for the CON and COND model ---although at the cost of increased computational burden and identifiability issues. In contrast, the MIX model seems to be less sensitive to the value of $\eta$.

\begin{table}[!h]
	\centering
	\begin{tabular}{ c | c c c c c}
		\hline
		Model      &$\eta=0.5$&$\eta=0.75$&$\eta=1$&$\eta=1.25$&$\eta=1.5$\\ \hline
CON-FLAT   &  \textbf{0.178} & 0.181 & 0.324 & 0.326 & 0.296 \\
CON-BNK    &  0.207 & \textbf{0.169} & 0.182 & 0.323 & 0.393 \\
CON-FIN    &  0.183 & \textbf{0.175} & 0.274 & 0.394 & 0.282 \\ \hline
COND-FLAT  &  \textbf{0.079} &  0.13 & 0.197 & 0.233 & 0.255 \\
COND-BNK   &  \textbf{0.067} & 0.092 &  0.13 & 0.186 &  0.22 \\
COND-FIN   &  \textbf{0.073} & 0.118 & 0.166 & 0.222 & 0.245 \\ \hline
MIX-FLAT   &  0.065 & 0.054 & \textbf{0.049} & 0.056 & 0.066 \\
MIX-BNK    &  0.062 & 0.057 & 0.055 & \textbf{0.052} & 0.049 \\
MIX-FIN    &  0.068 & 0.055 & \textbf{0.051} & 0.051 & 0.062 \\ \hline
	\end{tabular}
	\caption{Average MAE across four dates. Sensitivity of the calibration to the values of $\mu^*_i$. The calibration is the same as in Table \ref{tab:Calibration}, with the exception of the parameters $\mu_i^* =  \eta \cdot \mu_i $.  The lowest average MAE for each row is highlighted in bold.}
	\label{tab:mus}
\end{table}

\end{document}